\documentclass[twocolumn,floatfix,superscriptaddress,pra,aps,10pt,longbibliography]{revtex4-2}
\usepackage{times}

\usepackage[normalem]{ulem}
\usepackage{amssymb}
\usepackage{color}
\usepackage{amsmath}
\usepackage{amsbsy}
\usepackage{amsthm}
\usepackage{graphicx}
\usepackage{bbm}
\usepackage{bm}
\usepackage{epsfig}
\usepackage{xfrac}
\usepackage{xcolor}
\usepackage{enumerate}
\usepackage{multirow}
\usepackage{orcidlink}
\definecolor{myurlcolor}{rgb}{0,0,0.7}
\definecolor{myrefcolor}{rgb}{0.8,0,0}
\hypersetup{
unicode=true,pdfusetitle, bookmarks=true,bookmarksnumbered=false,bookmarksopen=false, breaklinks=false,pdfborder={0 0 0},backref=false,colorlinks=true, linkcolor=myrefcolor,citecolor=myurlcolor,urlcolor=myurlcolor}

\newcommand{\ket}[1]{\left| {#1} \right\rangle}
\newcommand{\bra}[1]{\left\langle {#1}\right|}

\newcommand{\braket}[2]{\langle #1|#2\rangle}

\renewcommand{\t}[1]{\textrm{#1}}
\newcommand{\tr}[0]{\mathrm{Tr}}

\newcommand{\vspan}{\mathrm{span}}

\usepackage{braket}
\newcommand{\thmref}[1]{\hyperref[#1]{Theorem~\ref{#1}}}
\newcommand{\lemmaref}[1]{\hyperref[#1]{Lemma~\ref{#1}}}
\newcommand{\figref}[1]{\hyperref[#1]{Fig.~\ref{#1}}}
\newcommand{\figaref}[1]{\hyperref[#1]{Fig.~\ref{#1}a}}
\newcommand{\figbref}[1]{\hyperref[#1]{Fig.~\ref{#1}b}}
\newcommand{\figcref}[1]{\hyperref[#1]{Fig.~\ref{#1}c}}
\renewcommand{\eqref}[1]{\hyperref[#1]{Eq.~(\ref{#1})}}
\newcommand{\eqsref}[2]{\hyperref[#1]{Eqs.~(\ref{#1})-(\ref{#2})}}
\newcommand{\appref}[1]{\hyperref[#1]{Appx.~\ref{#1}}}

\usepackage{dsfont}

\newtheorem{theorem}{Theorem}
\newtheorem{lemma}{Lemma}

\usepackage{tabularray}
\DeclareFontFamily{U}{zavm}{}
\DeclareFontShape{U}{zavm}{m}{n}{<-> favmr7y}{}
\DeclareRobustCommand{\cmark}{{\usefont{U}{zavm}{m}{n}\char128}}
\DeclareRobustCommand{\xmark}{{\usefont{U}{zavm}{m}{n}\char129}}

\begin{document}

\title{Protecting Heisenberg scaling in quantum metrology via engineered dressed states}

\author{Wojciech G{\'o}recki\,
\orcidlink{0000-0001-9912-9186}}
\email{wojciechgorecki.qi@gmail.com}
\affiliation{Freie Universit{\"a}t Berlin, Fachbereich Physik and Dahlem
Center for Complex Quantum Systems, Arnimallee 14, 14195 Berlin, Germany}
\author{Christiane P. Koch\,
\orcidlink{0000-0001-6285-5766}}
\email{christiane.koch@fu-berlin.de}
\affiliation{Freie Universit{\"a}t Berlin, Fachbereich Physik and Dahlem
Center for Complex Quantum Systems, Arnimallee 14, 14195 Berlin, Germany}

\begin{abstract}
Quantum metrology promises precision beyond classical limits but environmental noise, unless properly controlled, reduces the quantum advantage to at most a constant improvement. A key challenge is therefore to design quantum control strategies that suppress noise while preserving sensitivity to the targeted signal. Here, we suggest to use dressed states generated by 
a constant Hamiltonian to achieve this goal and show that success of this strategy depends on the spectral properties of the environment. For low-temperature noise, we show that Heisenberg scaling can be achieved if and only if the signal generator lies outside the linear span of the system–environment coupling operators. This implies that the proper dressed states may enable Heisenberg scaling even in cases where the well-known Hamiltonian-not-in-Lindblad-span criterion, evaluated without dressing, would forbid it. We illustrate dressed state metrology
for the example of
 NV-center thermometry under magnetic-field fluctuations, with the framework readily applicable to other platforms as well.
\end{abstract}

\maketitle

{\it Introduction---}Quantum metrology aims to estimate parameters encoded in the Hamiltonian of a system by preparing a probe state, letting it evolve (possibly under external control), performing a measurement, and inferring the parameter from the outcomes~\cite{degen2017quantum}. Optimizing such protocols~\cite{ma2021adaptive,marciniak2022optimal,liu2024fully,hernandez2024optimal,kurdzialek2025quantum}
can significantly enhance precision over standard methods~\cite{giovannetti2011advances,toth2014quantum,liu2017quantum,huang2024entanglement}, with applications ranging from quantum clocks \cite{bloom2014optical,ludlow2015optical,nicholson2015systematic,campbell2017fermi,zhang2024frequency}, gravitometry \cite{bothwell2022resolving,bongs2019taking,rosi2014precision,kasevich1991atomic}, and gravitational-wave detection \cite{caves1981quantum,tse2019quantum,acernese2019increasing}, to magneto- and electrometry \cite{kucsko2013nanometre,dolde2011electric,barry2020sensitivity,zhou2020quantum}, quantum imaging \cite{brida2010experimental,tsang2016quantum,casacio2021quantum}, and accelerometry \cite{cheiney2018navigation}. 
A key ingredient behind quantum enhancement is the coherent accumulation of signal: under noiseless unitary evolution, it can grow both with the interrogation time $T$ and the number of probes $N$ (via entanglement), yielding the mean squared error with Heisenberg scaling (HS) $\sim 1/(NT)^2$ ~\cite{giovannetti2006quantum,Paris2009,giovannetti2011advances,demkowicz2015optical,Schnabel2016,degen2017quantum,Pezze2018,Pirandola2018,wang2019heisenberg,huang2024entanglement,puig2025dynamical}. In the presence of generic noise, however, decoherence and dissipation suppress both entanglement and long-time signal accumulation, leading to the standard scaling $\sim 1/(NT)$~\cite{fujiwara2008fibre,escher2011general,demkowicz2012elusive,kolodynski2013efficient,knysh2014true,demkowicz2014using,sekatski2017quantum,layden2018ancilla,zhou2021asymptotic,kurdzialek2022using,das2025universal}. In certain cases, suitably designed controls can restore unitary dynamics, for example through noise filtering~\cite{paz2014general}, dynamical decoupling~\cite{viola1999dynamical,green2013arbitrary}, continuous dynamical decoupling~\cite{chen2006geometric,soare2014experimental}, dressed states induced by micro- or radio-frequency driving~\cite{timoney2011quantum,xu2012coherence,golter2014protecting,hirose2012continuous} or static external fields~\cite{dolde2011electric}, as well as quantum error correction (QEC)~\cite{arrad2014increasing,unden2016quantum,chen2024quantum}. However, in a metrological setting such techniques may also suppress the signal itself. It is therefore important to identify in which cases noise can be mitigated without losing sensitivity to the signal, thus allowing HS.

Solutions to this problem 
have been found in the context of noise-resilient entangled states~\cite{escher2011general,demkowicz2012elusive,fujiwara2008fibre,kolodynski2013efficient,knysh2014true} and quantum error correction for Markovian noise~\cite{demkowicz2014using,zhou2021asymptotic,sekatski2017quantum,kurdzialek2022using,layden2018ancilla,das2025universal}: The Hamiltonian-not-in-Lindblad-span (HNLS) criterion for uncorrelated noise identifies when quantum error correction can recover HS for systems governed by a master equation~\cite{demkowicz2017adaptive,zhou2018achieving,wan2022bounds}. However, 
one could also attempt to prevent noise accumulation during the evolution, rather than correcting errors after they occur.
For non-Markovian dynamics,
temporally correlated noise can be mitigated more effectively than uncorrelated noise~\cite{altherr2021quantum,kurdzialek2025universal,mann2025quantum}. Specifically, for qubits under dephasing, working on timescales shorter than the bath correlation time improves the precision beyond standard scaling, but not up to HS~\cite{chin2012quantum,macieszczak2015zeno,smirne2016ultimate,haase2018fundamental,riberi2022frequency,riberi2026precision}; more generally, HS can be achieved with qubit sensors only for particular types of noise
~\cite{sekatski2016dynamical} or nonlinear encoding~\cite{riberi2025optimal}. Since
extra dimensions are known to be a useful control resource~\cite{basilewitsch2021fundamental,ticozzi2014quantum,ticozzi2017quantum}, it is natural to ask whether HS can be achieved under general noise in higher dimensions.

In this work, we propose to achieve HS by preventing noise accumulation using dressed states in sensors with more than two levels: for a given system--environment interaction Hamiltonian, the choice of dressed states modifies the effective noise acting on the system~\cite{beaudoin2011dissipation}, as readily seen in e.g.\ adiabatic master equations~\cite{albash2012quantum}.
Assuming access to a noiseless ancilla, we derive necessary and sufficient conditions for HS under different bath spectra; for low-temperature noise, which results in dephasing and energy relaxation, we show that an appropriate choice of dressed states may enable HS even when the HNLS criterion~\cite{demkowicz2017adaptive,zhou2018achieving,wan2022bounds}, evaluated in the absence of dressing, would forbid it.
As an illustration, we consider as sensor a spin-1 particle, inspired by  thermometry experiments which successfully decoupled an NV center from a slowly varying isotropic magnetic field (the source of dephasing) without compromising the signal~\cite{yun2021temperature,tabuchi2023temperature,beaver2024selective}. Accounting additionally for energy relaxation, we find that 
entanglement with a noiseless auxiliary system is necessary to achieve HS.
Our analysis identifies where genuine room for improvement remains, providing a basis for including platform-specific constraints and more realistic noise models.

{\it Decoupling from classical noise---}Consider a system coupled to an environment (``bath''), with total
Hamiltonian
\begin{equation}
    H=H_S\otimes\openone_B+\openone_S\otimes H_B+H_I,
\end{equation}
where the interaction part reads
\begin{equation}
\label{eq:ab}
    H_I=\sum_\alpha A_\alpha\otimes B_\alpha,
\end{equation}
with Hermitian operators $A_\alpha$ acting on the system and $B_\alpha$ on the
environment.
The free Hamiltonian of the system depends on a parameter $\omega$, and our goal
is to estimate its small deviation from a known value $\omega_0$,
$\omega=\omega_0+\delta\omega$. To first order,
$H^{\t{free}}_S(\omega)\approx H^{\t{free}}_S(\omega_0)+\delta\omega\,G$, where
$G=dH^{\t{free}}_S/d\omega$ is the signal generator. We assume control over the
system through a static term $H_C$, so that the total system Hamiltonian reads
\begin{equation}
\label{eq:hsp}
    H_S=H^{\t{free}}_S(\omega_0)+\delta\omega\,G+H_C
      =H_D+\delta\omega\,G,
\end{equation}
where $H_D=H^{\t{free}}_S(\omega_0)+H_C$ and its eigenstates are referred to as dressed states. A static $H_C$ may be realized either by
applying a constant field to the system or effectively via a time-periodic (e.g., laser or microwave) drive.

To illustrate the decoupling mechanism,
consider first the simpler situation in which the system is weakly coupled
to classical scalar fields slowly fluctuating around zero,
\begin{equation}
    H_I\to H_{\t{cl}}(t)=\sum_\alpha \xi_\alpha(t)\,A_\alpha,\quad H=H_S+H_{\t{cl}}(t).
\end{equation}
Let $\Pi_\epsilon$ denote the
projection onto the eigenspace of $H_D$ with energy $\epsilon$. If the distinct eigenvalues of $H_D$ are sufficiently well
separated, the off-diagonal blocks of both $\delta\omega\,G$ and $H_{\t{cl}}(t)$
are suppressed by fast oscillations. To leading order, the evolution is then 
governed by the effective Hamiltonian
\begin{equation}
\label{eq:hamproj}
H_{\t{eff}}(t)=H_D+\delta\omega\sum_\epsilon\Pi_\epsilon G \Pi_\epsilon
  +\sum_{\epsilon,\alpha}\xi_\alpha(t)\Pi_\epsilon A_\alpha\Pi_\epsilon.
\end{equation}
Identifying $H_D$ as a strong driving whose spectral projections define the
Zeno subspaces, and the signal and noise terms
$V(t)=\delta\omega\,G+H_{\t{cl}}(t)$ as perturbations~\footnote{Note that reaching the
Zeno regime here requires only that the
\emph{total} dressing Hamiltonian be strong compared with the noise operators,
while $H_C$ itself may be of the same order of magnitude as
$H^{\t{free}}_S(\omega_0)$.}, we can invoke the
strong-driving bounds of quantum Zeno
dynamics~\cite{facchi2002quantum,Facchi_2008} to control the deviation of the
exact evolution from the one generated by
$H^{\t{eff}}$~\cite[Remark~14]{Burgarth2022oneboundtorulethem}:
\begin{multline}
\label{eq:timesscales}
\Big\|\mathcal{T}e^{-i\int_0^{T}\!dt\,H(t)}-\mathcal{T}e^{-i\int_0^{T}\!dt\,H_{\t{eff}}(t)}\Big\|
\le\\
\frac{\sqrt m}{\nu_{\min}}\big(1+2T\|V\|_{\infty,T}\big)
\big(2\|V\|_{\infty,T}+T\|\dot V\|_{\infty,T}\big),
\end{multline}
where $\|X\|_{\infty,T}=\sup_{t\in[0,T]}\|X(t)\|$, $\nu_{\min}$ is the smallest
gap between distinct eigenvalues of $H_D$, and $m$ is their number. The
projected description \eqref{eq:hamproj} is thus accurate provided the
perturbation is weak compared with the dressing gap,
$\sqrt m\,\|V\|_{\infty,T}\ll\nu_{\min}$, and the interrogation time is short
compared with the scales set by $\|V\|_{\infty,T}$ and $\|\dot V\|_{\infty,T}$;
see Supplemental Material (SM)~\ref{app:zeno}.

Since the $\xi_\alpha(t)$ are unknown, the residual couplings
$\Pi_\epsilon A_\alpha\Pi_\epsilon$ dephase the Zeno subspaces. 
Heisenberg
scaling therefore requires a subspace on which they act trivially, yet which
remains sensitive to the signal.
Assuming that the $\xi_\alpha(t)$ for different $\alpha$ are not
fully correlated, this requires two eigenstates of $H_D$, $\ket{\psi_0}$,
$\ket{\psi_1}$, with eigenenergies $\lambda_0\neq\lambda_1$, such that
\begin{equation}
\begin{split}
\label{eq:dephasing}
   \forall_\alpha\ \ &\braket{\psi_0|A_\alpha|\psi_0}
   =\braket{\psi_1|A_\alpha|\psi_1}\\
   \t{and}\quad &\braket{\psi_1|G|\psi_1}-\braket{\psi_0|G|\psi_0}>0,
   \end{split}
\end{equation}
so the effective signal generator $G_{\t{eff}}=\braket{\psi_0|G|\psi_0}\ket{\psi_0}\bra{\psi_0}+\braket{\psi_1|G|\psi_1}\ket{\psi_1}\bra{\psi_1}$ acts non-trivially on this subspace.
Once two such states are found, they can always be engineered as the lowest
eigenstates of $H_D$ by choosing
$H_C=-H^{\t{free}}_S(\omega_0)+\lambda_0\ket{\psi_0}\bra{\psi_0}
+\lambda_1\ket{\psi_1}\bra{\psi_1}$ with $\lambda_0<\lambda_1<0$. In
many practical cases much simpler control suffices, as we discuss below in
\textit{Application}. Designing the protocol requires knowledge of the coupling operators $A_\alpha$, but not of the noise statistics.

A necessary condition for the existence of such a pair of states is 
$G\notin\mathrm{span}_{\mathbb R}\{\openone,A_\alpha\}$. This becomes sufficient once the system is supplemented by a noiseless ancilla
under full control, i.e.\ for $G\otimes\openone$ and $A_\alpha\otimes\openone$
on $\mathcal H_S\otimes\mathcal H_A$, with $H_C$ acting jointly on both.

\begin{lemma}[Algebraic criterion]
\label{lem:core}
Assuming auxiliary system with $\dim\mathcal{H}_A\ge 2\dim\mathcal{H}_S$, a pair of orthogonal vectors
$|\psi_0\rangle,|\psi_1\rangle\in\mathcal{H}_S\otimes\mathcal{H}_A$ satisfying \eqref{eq:dephasing} exists if and only if
\begin{equation}
\label{eq:lemma}
   G\notin \mathrm{span}_{\mathbb R}\{\openone,A_\alpha\}.
\end{equation}
The maximization of
$\langle\psi_1|G|\psi_1\rangle-\langle\psi_0|G|\psi_0\rangle$ subject to these
constraints can be formulated as a semidefinite program. 
\end{lemma}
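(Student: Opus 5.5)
Let $\rho_j=\tr_A\ket{\psi_j}\bra{\psi_j}$ denote the reduced states on $\mathcal H_S$. We reduce the lemma to a statement about pairs of density matrices on $\mathcal H_S$. Since $G$ and $A_\alpha$ act as $G\otimes\openone$ and $A_\alpha\otimes\openone$, the conditions \eqref{eq:dephasing} read
\begin{equation*}
\tr[A_\alpha(\rho_1-\rho_0)]=0\ \ \forall_\alpha,\qquad \tr[G(\rho_1-\rho_0)]>0 .
\end{equation*}
Conversely, any pair of density matrices $\rho_0,\rho_1$ on $\mathcal H_S$ is realized by orthogonal purifications once $\dim\mathcal H_A\ge 2\dim\mathcal H_S$. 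Split $\mathcal H_A=\mathcal H_{A_0}\oplus\mathcal H_{A_1}$ with both summands of dimension at least $\dim\mathcal H_S$. Purify $\rho_0$ into $\mathcal H_S\otimes\mathcal H_{A_0}$ and $\rho_1$ into $\mathcal H_S\otimes\mathcal H_{A_1}$. The resulting vectors are orthogonal because their ancilla supports are orthogonal. Hence the lemma is equivalent to the following: there exist states $\rho_0,\rho_1$ on $\mathcal H_S$ with $\tr[A_\alpha\Delta]=0$ and $\tr[G\Delta]>0$, where $\Delta=\rho_1-\rho_0$.

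The set of differences $\Delta=\rho_1-\rho_0$ is exactly the set of traceless Hermitian operators $X$ with $\|X\|_1\le 2$. Indeed, write $X=X_+-X_-$ with $\tr X_+=\tr X_-=t\le 1$, and set $\rho_{1}=X_++(1-t)\sigma$, $\rho_0=X_-+(1-t)\sigma$ for any state $\sigma$. The question therefore becomes linear algebra on the real space of Hermitian matrices with the Hilbert--Schmidt inner product. We need a traceless Hermitian $X$ orthogonal to every $A_\alpha$ with $\tr[GX]\neq0$; rescaling and a sign flip then give positivity and the norm bound.

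Let $W=\mathrm{span}_{\mathbb R}\{\openone,A_\alpha\}$. The set of admissible $X$ is exactly $W^\perp$. If $G\in W$, then $\tr[GX]=0$ for all $X\in W^\perp$, so no such pair exists; this gives necessity. If $G\notin W$, write $G=G_W+G_\perp$ with $G_\perp\in W^\perp$ and $G_\perp\neq0$. Taking $X=G_\perp/\|G_\perp\|_1\cdot 2$ gives $\tr[GX]=2\tr[G_\perp^2]/\|G_\perp\|_1>0$; this gives sufficiency.

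For the SDP formulation, we maximize $\tr[G(\rho_1-\rho_0)]$ over $\rho_0,\rho_1\succeq0$ subject to $\tr\rho_j=1$ and $\tr[A_\alpha(\rho_1-\rho_0)]=0$. This is a linear objective with linear and semidefinite constraints. The reduction above shows that its optimum equals the optimum over pure orthogonal pairs on $\mathcal H_S\otimes\mathcal H_A$.

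The main subtlety is the purification step: we must ensure orthogonality without changing the reduced states, which is exactly where the assumption $\dim\mathcal H_A\ge 2\dim\mathcal H_S$ enters. The remaining steps are elementary duality in the space of Hermitian matrices.
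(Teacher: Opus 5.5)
Your proof is correct and follows essentially the paper's own route. Like the paper, you reduce to the marginals on $\mathcal H_S$, project $G$ Hilbert--Schmidt-orthogonally off $\mathrm{span}_{\mathbb R}\{\openone,A_\alpha\}$, and purify the normalized positive and negative parts of $G_\perp$ with orthogonal ancilla supports to get sensitivity $2\tr(G_\perp^2)/\tr|G_\perp|$. The only small differences are in the SDP step: you optimize directly over $(\rho_0,\rho_1)$ instead of over $\tilde G$ with the trace-norm constraint linearized through an auxiliary $X\succeq\pm\tilde G$, and you explicitly prove that every traceless Hermitian operator of trace norm at most $2$ is such a difference, which the paper uses only implicitly.
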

\begin{proof}
The proof adapts the method of~\cite{zhou2018achieving} to a different operator
space; see End Matter~\ref{app:proof}.
\end{proof}

Once a control Hamiltonian $H_C$ has been applied such that
$\ket{\psi_0},\ket{\psi_1}$ are energy-separated eigenstates of $H_D$, the
metrological protocol consists of preparing the input state
$\ket{\psi_{\t{in}}}=\frac{1}{\sqrt{2}}(\ket{\psi_0}+\ket{\psi_1})$, letting it evolve for a time $T$, and performing a projective measurement.
For a large number of repetitions $k$, the variance of the optimal estimator asymptotically saturates the quantum Cram\'er--Rao bound $\Delta^2\tilde{\omega}\approx1/(kF_Q)$~\cite{degen2017quantum}. Here, the quantum Fisher information per trial is given by $F_Q=4T^2\Delta^2 G_{\t{eff}}$, where $\Delta^2 G_{\t{eff}}$ is the variance of the effective signal generator in the input state. This yields Heisenberg scaling in $T$, for times at which the projected
description remains accurate.
More generally, with access to $N$ such subsystems one may prepare an entangled input state for which the variance of the total effective generator scales as $N^2$, giving $F_Q\sim(NT)^2$; maintaining the same window of approximately unitary evolution then requires a stronger dressing, scaling with $N$, see SM \ref{app:zeno}.

{\it Master equation in the weak-coupling limit---}Returning to the quantum bath of~\eqref{eq:ab}, we now analyze 
the potential decoupling from the noise 
within the master equation formalism, relating it to the bath spectral
density.
We assume the standard Born--Markov approximation, i.e., weak coupling, and a bath correlation time much shorter than the relaxation time of the
system, together with the rotating-wave (secular) approximation. Then the reduced dynamics is described by the
Gorini--Kossakowski--Sudarshan--Lindblad (GKSL) 
equation~\cite{breuer2002theory},
\begin{equation}
\label{eq:master}
    \frac{d}{dt}\rho_S(t)=-i[H_S+H_{LS},\rho_S(t)]+\mathcal D[\rho_S(t)]\,.
\end{equation}
Here $H_{LS}$ is the Lamb-shift Hamiltonian and $\mathcal D[\cdot]$ describes the dissipative part of the evolution,
\begin{equation}
\label{eq:d}
   \mathcal D[\rho_S]=\sum_\nu \sum_{\alpha,\beta}\gamma_{\alpha\beta}(\nu)
   \left(L_\beta^{\nu}\rho_S{L_\alpha^{\nu}}^\dagger
   -\frac{1}{2}\{{L_\alpha^{\nu}}^\dagger L_\beta^{\nu},\rho_S\}\right)\,,
\end{equation}
where $\gamma_{\alpha\beta}(\nu)$ is the bath spectral density and $L_\alpha^\nu$ the jump operator corresponding to the energy gap $\nu$,
\begin{equation}
\label{eq:jump}
  L_\alpha^\nu\equiv \sum_{\epsilon'-\epsilon=\nu}\Pi_{\epsilon}A_\alpha
  \Pi_{\epsilon'}.
\end{equation}
As above, the projections are taken onto the eigenspaces of $H_D$ since $H_S=H_D+\delta\omega\,G$ with $\delta\omega$
small. We assume throughout that $\gamma_{\alpha\beta}(\nu)$ has full rank for every
$\nu$ at which it does not vanish, i.e.\ the noises generated by different
$A_\alpha$ are neither fully correlated nor anticorrelated, 
analogously to the assumption on the $\xi_\alpha(t)$ above. This is the least favorable case, 
and a protocol that cancels the noise for full-rank
$\gamma_{\alpha\beta}(\nu)$ remains valid when the rank is deficient, whereas the
converse need not hold~\footnote{Moreover, whenever the spectral density factorizes, 
$\gamma_{\alpha\beta}(\nu)=f(\nu)\gamma_{\alpha\beta}$ with scalar $f(\nu)$, a
rotation in operator space diagonalizes $\gamma_{\alpha\beta}$, reducing a rank-deficient case to a full-rank one with a smaller set of $A'_\alpha$.}. Recall that $\nu=0$ corresponds to dephasing, $\nu>0$ to
relaxation (decay), and $\nu<0$ to thermal excitation. For a low-temperature
environment $\gamma_{\alpha\beta}(\nu)\approx0$ for all $\nu<0$, i.e.\ thermal
excitation is negligible~\footnote{If a static $H_C$ is realized by periodic driving, the drive may open
transitions at shifted frequencies. This is avoided for a far-detuned drive
adiabatically eliminating an auxiliary level~\cite{brion2007adiabatic}}.
Dephasing caused by a slowly fluctuating classical field is
recovered here when $\gamma_{\alpha\beta}(\nu)$ is peaked around $\nu=0$ with a width (roughly the inverse bath correlation time) much smaller than the energy gaps of $H_D$.

The dependence of the Lindblad operators on the eigenstates of $H_D$
opens the possibility of creating decoherence-free dressed subspaces---i.e., subspaces spanned by a subset of eigenstates of the modified system Hamiltonian that remain invariant under the non-unitary part of the evolution. Our goal is to identify when a control Hamiltonian can be found such that, for the resulting master equation, either a decoherence-free dressed subspace sensitive to variations of $\omega$ exists---thus immediately enabling HS---or HS can be recovered through quantum error correction.


{\it Conditions for achievability of HS---}
First we focus on low temperature noise. 
In the absence of thermal excitation,
the problem of maximizing the signal while preserving unitary evolution may be simplified to the following optimization problem:

\begin{subequations}
\label{eq:cond}
\begin{align}
\max_{\ket{\psi_0},\ket{\psi_1}}&\braket{\psi_1|G|\psi_1}-\braket{\psi_0|G|\psi_0}, \quad\t{with}\label{eq:maxG}\\
\t{(dephasing)}&\quad
   \forall_\alpha\ \ \braket{\psi_0|A_\alpha|\psi_0}
   =\braket{\psi_1|A_\alpha|\psi_1},
  \label{eq:dephasing2}\\
   \t{(relaxation)}&\quad\forall_\alpha\ \ \braket{\psi_0|A_\alpha|\psi_1}=0,
  \label{eq:relaxation}
\end{align}
\end{subequations}
where $\ket{\psi_0},\ket{\psi_1}$ are the two lowest eigenstates of $H_D$.
These conditions imply that all operators $A_\alpha$ act proportionally to the identity when projected onto subspace $\mathcal C := \vspan_{\mathbb C}\{\ket{\psi_0},\ket{\psi_1}\}$, i.e., $\Pi_\mathcal C A_\alpha \Pi_\mathcal C \propto \Pi_\mathcal C$.
The Lamb shift $H_{LS}$
commutes with $H_D$ by construction and, as long as its value is known, does
not affect the estimation procedure; see End Matter~\ref{app:lamb}.
The optimization~\eqref{eq:cond} may give a non-zero result within
$\mathcal H_S$ alone in particular cases, but only with an ancilla is it
guaranteed in general, by the following theorem.

\setcounter{theorem}{0}
\begin{theorem}
\label{thm:1}
For the master equation in the weak-coupling limit
in the presence of dephasing and relaxation noise and assuming full control over both the system and a noiseless ancilla, a decoherence-free subspace useful for sensing $\delta\omega$ can be generated by properly dressing the Hamiltonian if and only if
\begin{equation}
\label{eq:hncs}
   G\notin \mathrm{span}_{\mathbb R}\{\openone,A_\alpha\,\,\t{for all}\,\,\alpha\}.
\end{equation}
\end{theorem}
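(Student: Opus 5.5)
Both directions reduce to \lemmaref{lem:core}. The extra ingredient is the relaxation constraint \eqref{eq:relaxation}, which we must show comes for free once an ancilla is available.

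\emph{Necessity.} Assume $G=c_0\openone+\sum_\alpha c_\alpha A_\alpha$ with real coefficients. Take any two states satisfying the dephasing condition \eqref{eq:dephasing2}. By linearity,
\[
\braket{\psi_1|G|\psi_1}-\braket{\psi_0|G|\psi_0}=\sum_\alpha c_\alpha\big(\braket{\psi_1|A_\alpha|\psi_1}-\braket{\psi_0|A_\alpha|\psi_0}\big)=0 .
\]
So no decoherence-free subspace spanned by dressed eigenstates can carry signal. The same conclusion holds on $\mathcal H_S\otimes\mathcal H_A$, because every operator involved acts as $X\otimes\openone$ there. More generally, on any invariant subspace $\mathcal C$ on which the dissipator vanishes, the projected operators satisfy $\Pi_\mathcal C A_\alpha\Pi_\mathcal C\propto\Pi_\mathcal C$. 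Hence $\Pi_\mathcal C G\Pi_\mathcal C\propto\Pi_\mathcal C$, and the effective generator is trivial.

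\emph{Sufficiency.} If $G\notin\mathrm{span}_{\mathbb R}\{\openone,A_\alpha\}$, \lemmaref{lem:core} (with a sufficiently large ancilla) gives orthogonal $\ket{\psi_0},\ket{\psi_1}$ satisfying \eqref{eq:dephasing} with a strictly positive signal gap. It remains to enforce \eqref{eq:relaxation}, $\braket{\psi_0|A_\alpha|\psi_1}=0$, without spoiling the rest. I would enlarge the ancilla and put an orthogonal ancillary flag on each vector: define $\ket{\psi_i'}=\ket{\psi_i}\otimes\ket{i}_{A'}$, with $\ket{0},\ket{1}$ orthogonal states of an extra ancillary qubit. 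Since $A_\alpha$ acts trivially on the ancilla, the cross terms vanish, $\braket{\psi_0'|A_\alpha\otimes\openone|\psi_1'}=0$. The diagonal expectation values of both $A_\alpha$ and $G$ are unchanged.

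Next choose $H_C=-H_S^{\t{free}}(\omega_0)\otimes\openone+\lambda_0\ket{\psi_0'}\bra{\psi_0'}+\lambda_1\ket{\psi_1'}\bra{\psi_1'}$ with $\lambda_0<\lambda_1<0$. Then $\ket{\psi_0'},\ket{\psi_1'}$ are the two lowest, nondegenerate eigenstates of $H_D$, and everything else sits at energy $0$. We then verify that $\mathcal C=\vspan\{\ket{\psi_0'},\ket{\psi_1'}\}$ is decoherence free under \eqref{eq:d} in the low-temperature case. There are three contributions to check.
\begin{itemize}
\item Dephasing jumps ($\nu=0$) restricted to $\mathcal C$ are $\Pi_\mathcal C A_\alpha\Pi_\mathcal C\propto\Pi_\mathcal C$. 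They therefore cancel in the commutator structure of the dissipator.
\item Relaxation within $\mathcal C$ ($\nu=\lambda_1-\lambda_0$) vanishes by \eqref{eq:relaxation}.
\item Jumps out of $\mathcal C$ would require $\nu<0$, i.e.\ excitation to the zero-energy block. These are suppressed at low temperature.
\end{itemize}
One subtlety is the $\nu=0$ jump operator $\sum_\epsilon\Pi_\epsilon A_\alpha\Pi_\epsilon$, since the zero-energy block is degenerate. Its restriction to $\mathcal C$ is nevertheless block diagonal. A jump operator proportional to the identity on $\mathcal C$ generates only a trivial dissipator there; the anticommutator terms reduce to constants times $\rho$ that cancel.

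Finally, the Lamb shift commutes with $H_D$, so by End Matter~\ref{app:lamb} it is known and harmless. This leaves unitary evolution with $G_{\t{eff}}$ having nonzero variance in $(\ket{\psi_0'}+\ket{\psi_1'})/\sqrt2$.

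The main obstacle is making sure that the noiseless ancilla, together with the dressing, really eliminates every dissipative channel acting on $\mathcal C$. In particular, the cross-correlated terms $\gamma_{\alpha\beta}$ with full rank must produce no residual dephasing from the constant shifts $\braket{\psi_i|A_\alpha|\psi_i}$. I expect this to follow because those constants are equal on $\psi_0$ and $\psi_1$, but it is where I would do the careful algebra.
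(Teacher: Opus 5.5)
Your proof is correct and follows essentially the paper's route: necessity by linearity, and sufficiency from Lemma~\ref{lem:core} with the relaxation condition enforced by giving $\ket{\psi_0},\ket{\psi_1}$ orthogonal support on the ancilla (the purifications built in the lemma's proof already have this property, so your extra flag qubit is harmless but redundant), combined with the main-text choice of $H_C$. The cross-correlation algebra you flag closes at once: for $\rho$ supported on $\mathcal C$ one has $L^0_\alpha\rho=\rho L^0_\alpha=a_\alpha\rho$ with real $a_\alpha$, so each $\gamma_{\alpha\beta}(0)$ term yields $(a_\beta a_\alpha-a_\alpha a_\beta)\rho=0$ regardless of rank.
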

\begin{proof}
The theorem follows directly from Lemma \ref{lem:core}, since the additional condition 
\eqref{eq:relaxation} may always be satisfied by choosing $\ket{\psi_0}$ and $\ket{\psi_1}$ such that they have orthogonal support on $\mathcal H_A$.
\end{proof}

The decoupling mechanism relies on the noise retaining a finite correlation
time, over which the corresponding jumps average out to zero, even when the
evolution is Markovian on the timescales of interest. Whenever the bath
spectral density has nonzero support at all frequencies---whether because the
noise is white (see SM~\ref{app:perpendicular}) or because the bath is
thermal---excitation can no longer be avoided and a further constraint appears:
\begin{equation}
\label{eq:spexcond}
\t{(excitation)}\quad\forall_\alpha\forall_{i\neq 0,1}\braket{\psi_i|A_\alpha|\psi_{0/1}}=0,
\end{equation}
where $\ket{\psi_i},i\neq 0,1$ are the remaining eigenvectors of $H_D$.
Since a decoherence-free dressed subspace is then generally impossible, it is
natural to relax complete noise suppression and instead seek one in which errors
may occur but remain correctable by quantum error correction.

The possibility of achieving Heisenberg scaling using QEC for \textit{fixed} Lindblad operators has been fully characterized in terms of the Hamiltonian-not-in-Lindblad-span (HNLS) theorem \cite{demkowicz2017adaptive,zhou2018achieving}. It states that, for a system governed by a master equation and allowing for the most general metrological protocol—including noiseless ancillas and arbitrary quantum error correction applied during the evolution—Heisenberg scaling can be achieved if and only if $G\notin\vspan_{\mathbb C}\{\openone,L_i,L^\dagger_i, L_j^\dagger L_i\}$ (here the single index $i$ covers both $\nu,\alpha$ in \eqref{eq:d}). This theorem relies on the assumption that the applied control (including QEC) does not modify the Lindblad operators. The assumption is met when QEC  
is implemented via strong, effectively instantaneous control pulses separated by time intervals longer than the coarse-graining timescale used to derive the master equation.
We extend this theorem by considering constant control in the weak-coupling regime, where the effective Lindblad operators are determined by both the (uncontrollable) coupling operators $A_\alpha$ and the eigenbasis of the (controllable) dressing Hamiltonian.

\begin{theorem}
\label{thm:2}
For the master equation in the weak-coupling limit
for fully general noise (including thermal excitation) and assuming full control over both the system and the ancilla, a correctable subspace—i.e., one for which an effective unitary evolution can be obtained after properly applying quantum error correction—useful for sensing $\delta\omega$ can be generated by appropriately dressing the system if and only if
\begin{equation}
\label{eq:hncs2}
G\notin \mathrm{span}_{\mathbb C}\{\openone,A_\alpha,A_\alpha A_\beta\,\,\t{for all}\,\,\alpha,\beta\}.
\end{equation}
Namely, this condition is equivalent to the existence of a control Hamiltonian $H_C$ acting on $\mathcal H_S\otimes \mathcal H_A$, for which
$G\notin\vspan_{\mathbb C}\{\openone,L_i,L^\dagger_i, L_j^\dagger L_i\}$
with 
the modified Lindblad operators given by \eqref{eq:jump}.
\end{theorem}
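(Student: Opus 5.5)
The plan is to prove the two directions separately, comparing the span $\mathcal S_A:=\vspan_{\mathbb C}\{\openone,A_\alpha,A_\alpha A_\beta\}$ (built from the bare couplings on $\mathcal H_S\otimes\mathcal H_A$, each $A_\alpha$ understood as $A_\alpha\otimes\openone$) with the dressed HNLS span $\mathcal S_L:=\vspan_{\mathbb C}\{\openone,L_i,L_i^\dagger,L_j^\dagger L_i\}$ obtained from \eqref{eq:jump}. The equivalence with the existence of a correctable sensing subspace then follows from the HNLS theorem of \cite{demkowicz2017adaptive,zhou2018achieving}, applied to the Lindblad operators fixed by a given $H_C$.

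\emph{Necessity} ($G\in\mathcal S_A\Rightarrow$ no useful $H_C$). I will show that $\mathcal S_L\subseteq\mathcal S_A$ does \emph{not} hold, so the argument cannot go this way. Instead I will argue at the level of the code space. By the HNLS theorem, a correctable code $\mathcal C$ with nontrivial effective signal requires $\Pi_{\mathcal C}X\Pi_{\mathcal C}\propto\Pi_{\mathcal C}$ for every $X\in\mathcal S_L$, while $\Pi_{\mathcal C}G\Pi_{\mathcal C}\not\propto\Pi_{\mathcal C}$. The key observation is that summing the jump operators over $\nu$ recovers the bare couplings: $\sum_\nu L_\alpha^\nu=\sum_{\epsilon,\epsilon'}\Pi_\epsilon A_\alpha\Pi_{\epsilon'}=A_\alpha$, and similarly $\sum_{\nu,\nu'}{L_\alpha^{\nu}}^\dagger L_\beta^{\nu'}=A_\alpha A_\beta$. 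This requires all frequencies $\nu$ to appear in \eqref{eq:d}, i.e. $\gamma_{\alpha\beta}(\nu)$ of full rank at every $\nu$; this holds for fully general (thermal) noise, together with the full-rank assumption. The subtlety is that cross terms ${L_\alpha^{\nu}}^\dagger L_\beta^{\nu'}$ with $\nu\neq\nu'$ need not lie in $\mathcal S_L$, because the HNLS span contains only products $L_j^\dagger L_i$ of Lindblad operators, and within the same $\nu$ block the dissipator only mixes equal $\nu$. I would handle this with a secular argument. The code state can be taken inside the dressed eigenbasis, and the signal is accumulated via $\sum_\epsilon\Pi_\epsilon G\Pi_\epsilon$, as in \eqref{eq:hamproj}. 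Projected onto a code built from dressed eigenvectors, the cross terms with $\nu\neq\nu'$ are off-diagonal in energy and do not affect the relevant projected (Knill--Laflamme-type) conditions. In this way $G\in\mathcal S_A$ forces the effective $G$ on the code into the projected $\mathcal S_L$, contradicting usefulness. This step, showing that the energy-block-diagonal part is all that matters, is the one I expect to be the main obstacle. I would first try making it rigorous by using $\Pi_\epsilon$-twirling, $X\mapsto\sum_\epsilon\Pi_\epsilon X\Pi_\epsilon$, which maps $A_\alpha A_\beta$ onto $\sum_\nu {L_\alpha^{\nu}}^\dagger L_\beta^{\nu}$.

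\emph{Sufficiency} ($G\notin\mathcal S_A\Rightarrow$ a useful $H_C$ exists). I will adapt the SDP/duality argument of \cite{zhou2018achieving} used for \lemmaref{lem:core}. Since $G\notin\mathcal S_A$, there are two ancilla-assisted states $\ket{\psi_0},\ket{\psi_1}$ whose code $\mathcal C$ satisfies the Knill--Laflamme conditions for $\{\openone,A_\alpha\}$ with respect to the error span $\mathcal S_A$, namely $\Pi_{\mathcal C}X\Pi_{\mathcal C}\propto\Pi_{\mathcal C}$ for all $X\in\mathcal S_A$, while $\Pi_{\mathcal C}G\Pi_{\mathcal C}\not\propto\Pi_{\mathcal C}$. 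I then choose $H_C$ to be very degenerate, e.g. $H_D=\lambda\Pi_{\mathcal C}+\lambda'(\openone-\Pi_{\mathcal C})$. With this choice the jump operators are the blocks $\Pi A_\alpha\Pi'$, and products of them are blocks of $A_\alpha A_\beta$ compressed through intermediate projectors. Making the code invariant, in the sense that $\Pi_{\mathcal C}$ commutes appropriately with these blocks, is the delicate point. I would try taking $H_D\propto\openone$ on a large subspace, so that $L_\alpha^0=A_\alpha$ there and the dressed span coincides with the bare one on the relevant sector. Finally, the secular approximation requires the signal to be block-diagonal, so I will check that $\Pi_{\mathcal C}G\Pi_{\mathcal C}$ retains nonzero variance.
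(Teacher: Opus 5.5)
Your overall architecture matches the paper's: sum the jump operators over $\nu$ for necessity, and use a dressing with exactly two eigenspaces $\mathcal C,\mathcal C^\perp$ for sufficiency. However, the step you flag as the main obstacle rests on a misreading of the HNLS span, and your workaround would not prove the stated claim.

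In $\vspan_{\mathbb C}\{\openone,L_i,L_i^\dagger,L_j^\dagger L_i\}$ the indices $i,j$ run independently over all labels $(\alpha,\nu)$. The span encodes the first-order Knill--Laflamme conditions for the short-time channel, and these involve every pair of Kraus operators $\sqrt{dt}\,L_i,\sqrt{dt}\,L_j$. This holds even though the secular dissipator itself has no terms coupling different $\nu$. With full-rank $\gamma_{\alpha\beta}(\nu)$, diagonalizing at each $\nu$ leaves the span unchanged. So the cross products ${L_\alpha^{\nu}}^\dagger L_\beta^{\nu'}$ with $\nu\neq\nu'$ do belong to $\mathcal S_L$.

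The identities you already wrote, $A_\alpha=\sum_\nu L_\alpha^\nu$ and $A_\alpha A_\beta=\sum_{\nu,\nu'}{L_\alpha^{\nu}}^\dagger L_\beta^{\nu'}$, then give $\mathcal S_A\subseteq\mathcal S_L$ for every $H_C$, given, as you note, that fully general noise populates every Bohr frequency. That is the inclusion necessity needs; the reverse one, which you rightly say fails, is irrelevant. This is exactly the paper's step (1). By contrast, the $\Pi_\epsilon$-twirl only yields $\sum_\epsilon\Pi_\epsilon G\Pi_\epsilon\in\mathcal S_L$, while the theorem is an algebraic statement about $G$ itself. Closing that difference would need the unproven claim that every HNLS code can be taken inside the dressed eigenbasis, with the energy-off-diagonal part of $G$ averaged away. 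That is an extra approximation, not part of the statement.

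For sufficiency you pick the paper's dressing $H_D=\lambda\Pi_{\mathcal C}+\lambda'(\openone-\Pi_{\mathcal C})$ but leave the check open, and the ``invariance'' of $\mathcal C$ you worry about is not needed. The jump operators are $L_\alpha^0=\Pi_{\mathcal C}A_\alpha\Pi_{\mathcal C}+\Pi_{\mathcal C^\perp}A_\alpha\Pi_{\mathcal C^\perp}$ and the two off-diagonal blocks $\Pi_{\mathcal C^\perp}A_\alpha\Pi_{\mathcal C}$ and $\Pi_{\mathcal C}A_\alpha\Pi_{\mathcal C^\perp}$. Compressed onto $\mathcal C$, the off-diagonal blocks vanish and $L_\alpha^0$ gives $\Pi_{\mathcal C}A_\alpha\Pi_{\mathcal C}\propto\Pi_{\mathcal C}$. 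Of all products ${L_\alpha^{\nu}}^\dagger L_\beta^{\nu'}$, only two survive:
\[
\Pi_{\mathcal C}A_\alpha\Pi_{\mathcal C}A_\beta\Pi_{\mathcal C}\propto\Pi_{\mathcal C},
\qquad
\Pi_{\mathcal C}A_\alpha\Pi_{\mathcal C^\perp}A_\beta\Pi_{\mathcal C}=\Pi_{\mathcal C}A_\alpha A_\beta\Pi_{\mathcal C}-\Pi_{\mathcal C}A_\alpha\Pi_{\mathcal C}A_\beta\Pi_{\mathcal C}\propto\Pi_{\mathcal C}.
\]
Both follow from the $\mathcal S_A$ Knill--Laflamme conditions of your code. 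Together with $\Pi_{\mathcal C}G\Pi_{\mathcal C}\not\propto\Pi_{\mathcal C}$ this gives $G\notin\mathcal S_L$. Since $\mathcal C$ is an eigenspace of $H_D$, the block $\Pi_{\mathcal C}G\Pi_{\mathcal C}$ is automatically the secular signal. Drop your fallback of making $H_D\propto\openone$ on a large subspace: it does not give $L_\alpha^0=A_\alpha$ there unless that subspace is invariant under every $A_\alpha$.
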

The proof is given in End Matter \ref{app:proof2}. Note that the span is taken over $\mathbb{C}$ rather than $\mathbb{R}$ (the listed elements are not necessarily Hermitian and constructing a Hermitian $G$ may require complex coefficients).

{\it Application and discussion---}We demonstrate the usefulness of our theory by considering a spin-1 particle with system and interaction Hamiltonians:
\begin{equation}
H^{\t{free}}_S=(D+\delta \omega) S_z^2,\quad H_I=\gamma_e\vec{S}\cdot \vec{B},
\end{equation}
so $G=S_z^2$ and $\forall_{\alpha\in\{x,y,z\}}A_\alpha=S_\alpha$.
Without going into the details of the bath Hamiltonian, we will consider three different spectral bath functions. As a concrete realization we consider an NV center ~\cite{dolde2011electric,doherty2013nitrogen,rondin2014magnetometry}, where the parameter $\delta \omega$ corresponds to temperature changes~\cite{tabuchi2023temperature}. A realistic description of the NV center requires a non-Markovian noise model~\cite{gaebel2006room,hanson2008coherent}, but to keep the correspondence with HNLS transparent we work within the Markovian master-equation formalism. Still, the decoupling mechanism discussed below is not restricted to this setting; as the classical-field analysis above shows, it applies to temporally correlated noise as well.

We first consider the case, where the spectral bath function is focused around zero, so the only type of noise appearing in the system is dephasing. In the absence of control, noise arising from fluctuations in the $x$ and $y$ directions is suppressed, while fluctuations along $z$ dominate, i.e. $L\propto S_z$. The standard HNLS condition is not satisfied, $G=S_z^2\in\vspan_{\mathbb C}\{\openone,S_z,S_z^2\}$, so HS is not achievable by any protocol acting on timescales where the dynamics is described by the averaged master equation, regardless of access to a noiseless ancilla. In contrast, the condition from \autoref{thm:1},
$G=S_z^2\notin\vspan_{\mathbb R}\{\openone,S_x,S_y,S_z\}$, is satisfied. Here the ancilla is in fact not needed: 
the states maximizing the signal with condition $\forall_{\alpha\in\{x,y,z\}}\bra{\psi_{0}}S_{\alpha}\ket{\psi_{0}}=\bra{\psi_{1}}S_{\alpha}\ket{\psi_{1}}$ may be found already in $\mathcal H_S$: $\ket{\psi_0}=\ket{0}$ and
$\ket{\psi_1}=\ket{\psi_{\pm}}=\tfrac{1}{\sqrt{2}}(\ket{+1}\pm\ket{-1})$, with $\ket{-1},\ket{0},\ket{+1}$ the eigenstates of $S_z$, and the control $H_C\propto S_x^2-S_y^2$ makes them eigenstates of $H_D$. Decoupling by a static control was among the strategies employed in~\cite{tabuchi2023temperature}, with a perpendicular magnetic field approximating $H_C$.
Within the approximations taken, the input state
$\ket{\psi_{\t{in}}}=\tfrac{1}{\sqrt{2}}(\ket{0}+\ket{\psi_-})$ evolves
unitarily, which leads to HS.
The resilience of this eigenbasis against the noise is due
to the expectation value of any spin component being zero
for each of these vectors. Note that such a basis can exist
only for spin $\geq 1$ since any pure state of a spin-$1/2$ points in a definite direction. This explains why HS cannot be
achieved for a qubit under isotropic coupling, irrespective of the signal
generator~\cite{sekatski2016dynamical}.

\begin{figure}[t!]
  \includegraphics[width=0.46
\textwidth]{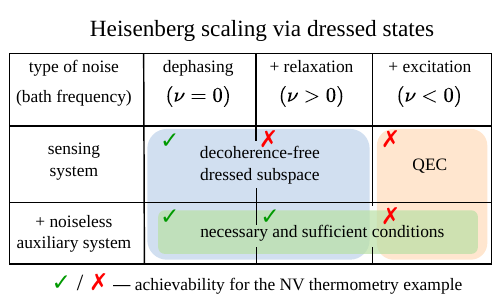}
\centering
\caption{
  Possibility of achieving Heisenberg scaling (HS) in quantum metrology using dressed states in the presence of three types of noise, corresponding to different bath spectra: dephasing,
  relaxation, and thermal excitation. In successive columns, additional noise processes are included, with all previously introduced noise sources 
  also present. For the first two cases, 
  decoherence-free dressed subspaces can be constructed, whereas the last case 
  requires the application of quantum error correction (QEC). When the sensing system is supplemented with a noiseless auxiliary system,
  necessary and sufficient conditions for achieving HS exist in all cases. Applying these theoretical results to 
  thermometry with an NV center interacting with its environment via magnetic-field fluctuations,  the symbols
  \cmark{} and
  \xmark{} indicate whether HS can be achieved in this specific model.}
\label{fig:table}
\end{figure}


We now 
include also spontaneous decay, 
while keeping the same coupling operators $S_\alpha$
which implies the additional condition $\forall_{\alpha\in\{x,y,z\}}\, \braket{\psi_0|S_\alpha|\psi_1}=0$
for a decoherence-free subspace. 
All conditions together then ensure 
that all spin operators act trivially on the subspace $\mathcal C=\t{span}\{\ket{\psi_0},\ket{\psi_1}\}$, i.e., $\forall_{\alpha\in\{x,y,z\}}\,\Pi_\mathcal C S_\alpha \Pi_\mathcal C\propto \Pi_\mathcal C$. 
This cannot be satisfied by any pair of vectors confined to the spin-1 three-dimensional space (see End Matter~\ref{app:example} for a proof). 
However, \autoref{thm:1} guarantees the existence of such a subspace in the Hilbert space extended by an auxiliary system. 
The subspace $\vspan\{\ket{0}\ket{\downarrow},\ket{\psi_-}\ket{\uparrow}\}$,
with orthogonal ancilla states $\ket{\downarrow},\ket{\uparrow}$, satisfies all
conditions and allows for HS.

The seemingly surprising fact that, for low-temperature noise, dressed states achieve HS where HNLS applied to the bare system forbids it has a simple origin: the latter was derived under the assumption that QEC operates on timescales longer than the noise-averaging time, and therefore addresses errors that have already occurred. 
The dressing acts earlier: the noise coupling terms average to zero over the bath correlation time, and no error forms.
This makes the condition for HS less restrictive---aptly captured by the Latin maxim \textit{morbum evitare quam curare facilius est} (``it is easier to prevent than to cure'').

Finally, for the most general bath spectral function, including thermal excitation, the generator $G$ is proportional to the square of one of the coupling operators, $G\propto S_z^2$. As a consequence, \autoref{thm:2} implies that neither decoherence-free subspaces nor quantum error-correction protocols can recover Heisenberg scaling. Determining whether more advanced noise-mitigation techniques could surpass these limits would necessitate a more refined environmental model and a description extending beyond the master equation approach.
Both theory and the discussed example are summarized in \autoref{fig:table}.

{\it Conclusions---}We have determined 
the conditions under which Heisenberg scaling can be achieved in noisy quantum metrology using dressed states generated by a constant control Hamiltonian. 
For dephasing and relaxation in the absence of thermal excitation,  
HS is achievable if and only if the signal generator lies outside the linear span of the system--environment coupling operators---in contrast to the conventional HNLS condition, involving the Lindblad operators together with their pairwise products. A carefully engineered dressed basis can therefore recover HS even where HNLS, evaluated for the undressed system, would rule it out. Our scheme exploits the finite bath correlation time, in line with the general
principle that correlated noise is easier to protect
against~\cite{altherr2021quantum,mann2025quantum,kurdzialek2025universal}. We have illustrated our results using an example inspired by
thermometry with an NV-center subject to magnetic field fluctuations, but our findings are universal, since they are based on algebraic relations between operators.
They can also be applied to other experimental platforms, including trapped ions, Rydberg atoms, and cold atomic gases---not necessarily for quantitative predictions, 
but to identify settings for which an advantage exists.
More, generally, our
results offer a recipe for tailoring control Hamiltonians to achieve desired metrological properties, while also serving as a foundation for future time-dependent sensing protocols. These could include adiabatic approaches for slow control~\cite{albash2012quantum}, non-adiabatic master equations for fast driving~\cite{dann2018time}, as well as periodic driving (Floquet) regimes where the drive frequency explicitly shifts the sampled bath spectrum~\cite[Sec. 8.4]{breuer2002theory}---all of which constitute  natural directions for future research.

{\it Acknowledgments---}We thank Francesco Albarelli for pointing us to relevant literature and Lorenza Viola for insightful discussions on the physical timescales behind the master-equation. Financial support from the Deutsche Forschungsgemeinschaft (DFG) – Project No. 277101999, Collaborative Research Centre (CRC) 183 (project B02) and the Alexander von Humboldt foundation are gratefully acknowledged.

\textit{Note added}---While our manuscript was under review, Ref.~\cite{zeng2026achieving} appeared, 
which discusses a iff conditions for HS for quantum Zeno dynamics (QZD) and dynamical decoupling (DD), obtaining
the same condition as in Lemma~\ref{lem:core}. This is not surprising, since Zeno dynamics may be realized
in the limit of an infinitely strong constant Hamiltonian~\cite{facchi2002quantum,Facchi_2008}.

\bibliography{biblio}

@article{groszkowski2023simple,
  title={Simple master equations for describing driven systems subject to classical non-Markovian noise},
  author={Groszkowski, Peter and Seif, Alireza and Koch, Jens and Clerk, Aashish A},
  journal={Quantum},
  volume={7},
  pages={972},
  year={2023},
  publisher={Verein zur F{\"o}rderung des Open Access Publizierens in den Quantenwissenschaften},
  doi={10.22331/q-2023-04-06-972}
}

@article{zeng2026achieving,
  title={Achieving Heisenberg limit under noisy conditions with quantum Zeno dynamics and dynamical decoupling},
  author={Zeng, Ke and Lahcen, Bakmou and Jiang, Yu and Tan, Kok Chuan},
  journal={arXiv preprint arXiv:2606.13205},
  year={2026},
  doi={10.48550/arXiv.2606.13205}
}

@article{Facchi_2008,
doi = {10.1088/1751-8113/41/49/493001},
url = {https://doi.org/10.1088/1751-8113/41/49/493001},
year = {2008},
month = {oct},
publisher = {},
volume = {41},
number = {49},
pages = {493001},
author = {Facchi, P and Pascazio, S},
title = {Quantum Zeno dynamics: mathematical and physical aspects},
journal = {Journal of Physics A: Mathematical and Theoretical}
}

@article{facchi2002quantum,
  title={Quantum zeno subspaces},
  author={Facchi, Paolo and Pascazio, Saverio},
  journal={Physical review letters},
  volume={89},
  number={8},
  pages={080401},
  year={2002},
  publisher={APS},
  doi={10.1103/PhysRevLett.89.080401}
}

@article{Burgarth2022oneboundtorulethem,
  doi = {10.22331/q-2022-06-14-737},
  url = {https://doi.org/10.22331/q-2022-06-14-737},
  title = {One bound to rule them all: from {A}diabatic to {Z}eno},
  author = {Burgarth, Daniel and Facchi, Paolo and Gramegna, Giovanni and Yuasa, Kazuya},
  journal = {{Quantum}},
  issn = {2521-327X},
  publisher = {{Verein zur F{\"{o}}rderung des Open Access Publizierens in den Quantenwissenschaften}},
  volume = {6},
  pages = {737},
  month = jun,
  year = {2022}
}

@article{gaebel2006room,
  title={Room-temperature coherent coupling of single spins in diamond},
  author={Gaebel, Torsten and Domhan, Michael and Popa, Iulian and Wittmann, Christoffer and Neumann, Philipp and Jelezko, Fedor and Rabeau, James R and Stavrias, Nikolas and Greentree, Andrew D and Prawer, Steven and others},
  journal={Nature Physics},
  volume={2},
  number={6},
  pages={408--413},
  year={2006},
  publisher={Nature Publishing Group UK London},
  doi={10.1038/nphys318}
}

@article{brion2007adiabatic,
  title={Adiabatic elimination in a lambda system},
  author={Brion, Etienne and Pedersen, Line Hjortsh{\o}j and M{\o}lmer, Klaus},
  journal={Journal of Physics A: Mathematical and Theoretical},
  volume={40},
  number={5},
  pages={1033--1043},
  year={2007},
  doi={10.1088/1751-8113/40/5/011}
}

@article{riberi2025optimal,
  title={Optimal asymptotic precision bounds for nonlinear quantum metrology under collective dephasing},
  author={Riberi, Francisco and Viola, Lorenza},
  journal={APL Quantum},
  volume={2},
  number={2},
  year={2025},
  publisher={AIP Publishing},
  doi={10.1063/5.0255629}
}

@article{riberi2026precision,
  title={Precision bounds for frequency estimation under collective dephasing and open-loop control},
  author={Riberi, Francisco and Paz-Silva, Gerardo and Viola, Lorenza},
  journal={arXiv preprint arXiv:2603.23804},
  year={2026},
  doi={10.48550/arXiv.2603.23804}
}

@article{riberi2022frequency,
  title={Frequency estimation under non-Markovian spatially correlated quantum noise},
  author={Riberi, Francisco and Norris, Leigh M and Beaudoin, F{\'e}lix and Viola, Lorenza},
  journal={New Journal of Physics},
  volume={24},
  number={10},
  pages={103011},
  year={2022},
  publisher={IOP Publishing},
  doi={10.1088/1367-2630/ac92a2}
}

@article{zhou2020quantum,
  title={Quantum metrology with strongly interacting spin systems},
  author={Zhou, Hengyun and Choi, Joonhee and Choi, Soonwon and Landig, Renate and Douglas, Alexander M and Isoya, Junichi and Jelezko, Fedor and Onoda, Shinobu and Sumiya, Hitoshi and Cappellaro, Paola and others},
  journal={Physical review X},
  volume={10},
  number={3},
  pages={031003},
  year={2020},
  publisher={APS},
  doi={10.1103/PhysRevX.10.031003}
}

@article{hernandez2024optimal,
  title={Optimal control of a quantum sensor: A fast algorithm based on an analytic solution},
  author={Hern{\'a}ndez-G{\'o}mez, Santiago and Balducci, Federico and Fasiolo, Giovanni and Cappellaro, Paola and Fabbri, Nicole and Scardicchio, Antonello},
  journal={SciPost Physics},
  volume={17},
  number={1},
  pages={004},
  year={2024},
  doi={10.21468/SciPostPhys.17.1.004}
}

@article{puig2025dynamical,
  title={From dynamical to steady-state many-body metrology: Precision limits and their attainability with two-body interactions},
  author={Puig, Ricard and Sekatski, Pavel and Erdman, Paolo Andrea and Abiuso, Paolo and Calsamiglia, John and Perarnau-Llobet, Mart{\'\i}},
  journal={PRX Quantum},
  volume={6},
  number={3},
  pages={030309},
  year={2025},
  publisher={APS},
  doi={10.1103/PRXQuantum.6.030309}
}

@article{beaudoin2011dissipation,
  title={Dissipation and ultrastrong coupling in circuit QED},
  author={Beaudoin, F{\'e}lix and Gambetta, Jay M and Blais, A},
  journal={Physical Review A—Atomic, Molecular, and Optical Physics},
  volume={84},
  number={4},
  pages={043832},
  year={2011},
  publisher={APS},
  doi={10.1103/PhysRevA.84.043832}
}

@article{wang2019heisenberg,
  title={Heisenberg-limited single-mode quantum metrology in a superconducting circuit},
  author={Wang, Weiting and Wu, Yukai and Ma, Yuwei and Cai, Weizhou and Hu, Ling and Mu, Xianghao and Xu, Yuan and Chen, Zi-Jie and Wang, Haiyan and Song, YP and others},
  journal={Nature communications},
  volume={10},
  number={1},
  pages={4382},
  year={2019},
  publisher={Nature Publishing Group UK London},
  doi={10.1038/s41467-019-12290-7}
}

@article{liu2017quantum,
  title={Quantum parameter estimation with optimal control},
  author={Liu, Jing and Yuan, Haidong},
  journal={Physical Review A},
  volume={96},
  number={1},
  pages={012117},
  year={2017},
  publisher={APS},
  doi={10.1103/PhysRevA.96.012117}
}

@article{kurdzialek2025quantum,
  title={Quantum metrology using quantum combs and tensor network formalism},
  author={Kurdzia{\l}ek, Stanis{\l}aw and Dulian, Piotr and Majsak, Joanna and Chakraborty, Sagnik and Demkowicz-Dobrza{\'n}ski, Rafa{\l}},
  journal={New Journal of Physics},
  volume={27},
  number={1},
  pages={013019},
  year={2025},
  publisher={IOP Publishing},
  doi={10.1088/1367-2630/ada8d1}
}

@inproceedings{ma2021adaptive,
  title={Adaptive circuit learning for quantum metrology},
  author={Ma, Ziqi and Gokhale, Pranav and Zheng, Tian-Xing and Zhou, Sisi and Yu, Xiaofei and Jiang, Liang and Maurer, Peter and Chong, Frederic T},
  booktitle={2021 IEEE International Conference on Quantum Computing and Engineering (QCE)},
  pages={419--430},
  year={2021},
  organization={IEEE},
  doi={10.1109/QCE52317.2021.00063}
}

@article{toth2014quantum,
  title={Quantum metrology from a quantum information science perspective},
  author={T{\'o}th, G{\'e}za and Apellaniz, Iagoba},
  journal={Journal of Physics A: Mathematical and Theoretical},
  volume={47},
  number={42},
  pages={424006},
  year={2014},
  publisher={IOP Publishing},
  doi={10.1088/1751-8113/47/42/424006}
}

@article{marciniak2022optimal,
  title={Optimal metrology with programmable quantum sensors},
  author={Marciniak, Christian D and Feldker, Thomas and Pogorelov, Ivan and Kaubruegger, Raphael and Vasilyev, Denis V and van Bijnen, Rick and Schindler, Philipp and Zoller, Peter and Blatt, Rainer and Monz, Thomas},
  journal={Nature},
  volume={603},
  number={7902},
  pages={604--609},
  year={2022},
  publisher={Nature Publishing Group UK London},
  doi={10.1038/s41586-022-04435-4}
}

@article{huang2024entanglement,
  title={Entanglement-enhanced quantum metrology: From standard quantum limit to Heisenberg limit},
  author={Huang, Jiahao and Zhuang, Min and Lee, Chaohong},
  journal={Applied Physics Reviews},
  volume={11},
  number={3},
  year={2024},
  publisher={AIP Publishing},
  doi={10.1063/5.0204102}
}

@article{liu2024fully,
  title={Fully-optimized quantum metrology: framework, tools, and applications},
  author={Liu, Qiushi and Hu, Zihao and Yuan, Haidong and Yang, Yuxiang},
  journal={Advanced Quantum Technologies},
  volume={7},
  number={12},
  pages={2400094},
  year={2024},
  publisher={Wiley Online Library},
  doi={10.1002/qute.202400094}
}

@article{ticozzi2014quantum,
  title={Quantum resources for purification and cooling: fundamental limits and opportunities},
  author={Ticozzi, Francesco and Viola, Lorenza},
  journal={Scientific reports},
  volume={4},
  number={1},
  pages={5192},
  year={2014},
  publisher={Nature Publishing Group UK London},
  doi={10.1038/srep05192}
}

@article{ticozzi2017quantum,
  title={Quantum and classical resources for unitary design of open-system evolutions},
  author={Ticozzi, Francesco and Viola, Lorenza},
  journal={Quantum Science and Technology},
  volume={2},
  number={3},
  pages={034001},
  year={2017},
  publisher={IOP Publishing},
  doi={10.1088/2058-9565/aa722a}
}

@article{basilewitsch2021fundamental,
  title={Fundamental bounds on qubit reset},
  author={Basilewitsch, Daniel and Fischer, Jonas and Reich, Daniel M and Sugny, Dominique and Koch, Christiane P},
  journal={Physical Review Research},
  volume={3},
  number={1},
  pages={013110},
  year={2021},
  publisher={APS},
  doi={10.1103/PhysRevResearch.3.013110}
}

@article{altherr2021quantum,
  title={Quantum metrology for non-Markovian processes},
  author={Altherr, Anian and Yang, Yuxiang},
  journal={Physical Review Letters},
  volume={127},
  number={6},
  pages={060501},
  year={2021},
  publisher={APS},
  doi={10.1103/PhysRevLett.127.060501}
}

@article{casacio2021quantum,
  title={Quantum-enhanced nonlinear microscopy},
  author={Casacio, Catxere A and Madsen, Lars S and Terrasson, Alex and Waleed, Muhammad and Barnscheidt, Kai and Hage, Boris and Taylor, Michael A and Bowen, Warwick P},
  journal={Nature},
  volume={594},
  number={7862},
  pages={201--206},
  year={2021},
  publisher={Nature Publishing Group UK London},
  doi={10.1038/s41586-021-03528-w}
}

@article{bothwell2022resolving,
  title={Resolving the gravitational redshift across a millimetre-scale atomic sample},
  author={Bothwell, Tobias and Kennedy, Colin J and Aeppli, Alexander and Kedar, Dhruv and Robinson, John M and Oelker, Eric and Staron, Alexander and Ye, Jun},
  journal={Nature},
  volume={602},
  number={7897},
  pages={420--424},
  year={2022},
  publisher={Nature Publishing Group UK London},
  doi={10.1038/s41586-021-04349-7}
}

@article{zhang2024frequency,
  title={Frequency ratio of the 229mTh nuclear isomeric transition and the 87Sr atomic clock},
  author={Zhang, Chuankun and Ooi, Tian and Higgins, Jacob S and Doyle, Jack F and von der Wense, Lars and Beeks, Kjeld and Leitner, Adrian and Kazakov, Georgy A and Li, Peng and Thirolf, Peter G and others},
  journal={Nature},
  volume={633},
  number={8028},
  pages={63--70},
  year={2024},
  publisher={Nature Publishing Group UK London},
  doi={10.1038/s41586-024-07839-6}
}

@article{campbell2017fermi,
  title={A Fermi-degenerate three-dimensional optical lattice clock},
  author={Campbell, Sara L and Hutson, RB and Marti, GE and Goban, Akihisa and Darkwah Oppong, Nelson and McNally, RL and Sonderhouse, Lindsay and Robinson, JM and Zhang, W and Bloom, BJ and others},
  journal={Science},
  volume={358},
  number={6359},
  pages={90--94},
  year={2017},
  publisher={American Association for the Advancement of Science},
  doi={10.1126/science.aam5538}
}

@article{dann2018time,
  title={Time-dependent Markovian quantum master equation},
  author={Dann, Roie and Levy, Amikam and Kosloff, Ronnie},
  journal={Physical Review A},
  volume={98},
  number={5},
  pages={052129},
  year={2018},
  publisher={APS},
  doi={10.1103/PhysRevA.98.052129}
}

@article{de2013quantum,
  title={Quantum parameter estimation affected by unitary disturbance},
  author={De Pasquale, Antonella and Rossini, Davide and Facchi, Paolo and Giovannetti, Vittorio},
  journal={Phys. Rev. A},
  volume={88},
  number={5},
  pages={052117},
  year={2013},
  publisher={APS},
doi={10.1103/PhysRevA.88.052117}
}

@article{ludlow2015optical,
  title={Optical atomic clocks},
  author={Ludlow, Andrew D and Boyd, Martin M and Ye, Jun and Peik, Ekkehard and Schmidt, Piet O},
  journal={Reviews of Modern Physics},
  volume={87},
  number={2},
  pages={637--701},
  year={2015},
  publisher={APS},
  doi={10.1103/RevModPhys.87.637}
}

@article{nicholson2015systematic,
  title={Systematic evaluation of an atomic clock at 2$\times$ 10- 18 total uncertainty},
  author={Nicholson, Travis L and Campbell, SL and Hutson, RB and Marti, G Edward and Bloom, BJ and McNally, Rees L and Zhang, Wei and Barrett, MD and Safronova, Marianna S and Strouse, GF and others},
  journal={Nature communications},
  volume={6},
  number={1},
  pages={6896},
  year={2015},
  publisher={Nature Publishing Group UK London},
  doi={10.1038/ncomms7896}
}

@article{bloom2014optical,
  title={An optical lattice clock with accuracy and stability at the 10- 18 level},
  author={Bloom, BJ and Nicholson, TL and Williams, JR and Campbell, SL and Bishof, M and Zhang, X and Zhang, W and Bromley, SL and Ye, J},
  journal={Nature},
  volume={506},
  number={7486},
  pages={71--75},
  year={2014},
  publisher={Nature Publishing Group UK London},
  doi={10.1038/nature12941}
}

@article{bongs2019taking,
  title={Taking atom interferometric quantum sensors from the laboratory to real-world applications},
  author={Bongs, Kai and Holynski, Michael and Vovrosh, Jamie and Bouyer, Philippe and Condon, Gabriel and Rasel, Ernst and Schubert, Christian and Schleich, Wolfgang P and Roura, Albert},
  journal={Nature Reviews Physics},
  volume={1},
  number={12},
  pages={731--739},
  year={2019},
  publisher={Nature Publishing Group UK London},
  doi={10.1038/s42254-019-0117-4}
}

@article{rosi2014precision,
  title={Precision measurement of the Newtonian gravitational constant using cold atoms},
  author={Rosi, G and Sorrentino, F and Cacciapuoti, L and Prevedelli, Marco and Tino, GM},
  journal={Nature},
  volume={510},
  number={7506},
  pages={518--521},
  year={2014},
  publisher={Nature Publishing Group UK London},
  doi={10.1038/nature13433}
}

@article{kasevich1991atomic,
  title={Atomic interferometry using stimulated Raman transitions},
  author={Kasevich, Mark and Chu, Steven},
  journal={Physical review letters},
  volume={67},
  number={2},
  pages={181},
  year={1991},
  publisher={APS},
  doi={10.1103/PhysRevLett.67.181}
}

@article{tse2019quantum,
  title={Quantum-enhanced advanced LIGO detectors in the era of gravitational-wave astronomy},
  author={Tse, Maggie and Yu, Haocun and Kijbunchoo, Nutsinee and Fernandez-Galiana, A and Dupej, P and Barsotti, L and Blair, CD and Brown, DD and Dwyer, SE ea and Effler, A and others},
  journal={Physical Review Letters},
  volume={123},
  number={23},
  pages={231107},
  year={2019},
  publisher={APS},
  doi={10.1103/PhysRevLett.123.231107}
}

@article{acernese2019increasing,
  title={Increasing the astrophysical reach of the advanced virgo detector via the application of squeezed vacuum states of light},
  author={Acernese, Fausto and Agathos, M and Aiello, L and Allocca, A and Amato, A and Ansoldi, S and Antier, S and Ar{\`e}ne, M and Arnaud, N and Ascenzi, S and others},
  journal={Physical review letters},
  volume={123},
  number={23},
  pages={231108},
  year={2019},
  publisher={APS},
  doi={10.1103/PhysRevLett.123.231108}
}

@article{kucsko2013nanometre,
  title={Nanometre-scale thermometry in a living cell},
  author={Kucsko, Georg and Maurer, Peter C and Yao, Norman Ying and Kubo, MICHAEL and Noh, Hyun Jong and Lo, Po Kam and Park, Hongkun and Lukin, Mikhail D},
  journal={Nature},
  volume={500},
  number={7460},
  pages={54--58},
  year={2013},
  publisher={Nature Publishing Group UK London},
  doi={10.1038/nature12373}
}

@article{barry2020sensitivity,
  title={Sensitivity optimization for NV-diamond magnetometry},
  author={Barry, John F and Schloss, Jennifer M and Bauch, Erik and Turner, Matthew J and Hart, Connor A and Pham, Linh M and Walsworth, Ronald L},
  journal={Reviews of Modern Physics},
  volume={92},
  number={1},
  pages={015004},
  year={2020},
  publisher={APS},
  doi={10.1103/RevModPhys.92.015004}
}

@article{brida2010experimental,
  title={Experimental realization of sub-shot-noise quantum imaging},
  author={Brida, Giorgio and Genovese, Marco and Ruo Berchera, Ivano},
  journal={Nature Photonics},
  volume={4},
  number={4},
  pages={227--230},
  year={2010},
  publisher={Nature Publishing Group UK London},
  doi={10.1038/nphoton.2010.29}
}

@article{cheiney2018navigation,
  title={Navigation-compatible hybrid quantum accelerometer using a Kalman filter},
  author={Cheiney, Pierrick and Fouch{\'e}, Lauriane and Templier, Simon and Napolitano, Fabien and Battelier, Baptiste and Bouyer, Philippe and Barrett, Brynle},
  journal={Physical Review Applied},
  volume={10},
  number={3},
  pages={034030},
  year={2018},
  publisher={APS},
  doi={10.1103/PhysRevApplied.10.034030}
}

@article{kurdzialek2025universal,
  title={Universal bounds for quantum metrology in the presence of correlated noise},
  author={Kurdzia{\l}ek, Stanis{\l}aw and Albarelli, Francesco and Demkowicz-Dobrza{\'n}ski, Rafa{\l}},
  journal={Physical Review Letters},
  volume={135},
  number={13},
  pages={130801},
  year={2025},
  publisher={APS},
doi={10.1103/jy3v-wkcb}
}

@article{arrad2014increasing,
  title={Increasing sensing resolution with error correction},
  author={Arrad, Gilad and Vinkler, Yuval and Aharonov, Dorit and Retzker, Alex},
  journal={Physical review letters},
  volume={112},
  number={15},
  pages={150801},
  year={2014},
  publisher={APS},
doi={10.1103/PhysRevLett.112.150801}
}

@article{haase2018fundamental,
  title={Fundamental limits to frequency estimation: a comprehensive microscopic perspective},
  author={Haase, Jan F and Smirne, Andrea and Ko{\l}ody{\'n}ski, Jan and Demkowicz-Dobrza{\'n}ski, Rafa{\l} and Huelga, Susana F},
  journal={New Journal of Physics},
  volume={20},
  number={5},
  pages={053009},
  year={2018},
  publisher={IOP Publishing},
doi={10.1088/1367-2630/aab67f}
}

@article{mann2025quantum,
  title={Quantum error-corrected non-Markovian metrology},
  author={Mann, Zachary and Cao, Ningping and Laflamme, Raymond and Zhou, Sisi},
  journal={PRX Quantum},
  volume={6},
  number={3},
  pages={030321},
  year={2025},
  publisher={APS},
doi={10.1103/wfyl-wtz3}
}

@article{tabuchi2023temperature,
  title={Temperature sensing with RF-dressed states of nitrogen-vacancy centers in diamond},
  author={Tabuchi, Hibiki and Matsuzaki, Yuichiro and Furuya, Noboru and Nakano, Yuta and Watanabe, Hideyuki and Tokuda, Norio and Mizuochi, Norikazu and Ishi-Hayase, Junko},
  journal={Journal of Applied Physics},
  volume={133},
  number={2},
  pages = {024401},
  year={2023},
  publisher={AIP Publishing},
doi={10.1063/5.0129706}
}

@article{yun2021temperature,
  title={Temperature Selective Thermometry with Sub-Microsecond Time Resolution Using Dressed-Spin States in Diamond},
  author={Yun, Jiwon and Kim, Kiho and Park, Sungjoon and Kim, Dohun},
  journal={Advanced Quantum Technologies},
  volume={4},
  number={11},
  pages={2100084},
  year={2021},
  publisher={Wiley Online Library},
doi={10.1002/qute.202100084}
}

@article{hirose2012continuous,
  title={Continuous dynamical decoupling magnetometry},
  author={Hirose, Masashi and Aiello, Clarice D and Cappellaro, Paola},
  journal={Physical Review A—Atomic, Molecular, and Optical Physics},
  volume={86},
  number={6},
  pages={062320},
  year={2012},
  publisher={APS},
doi={10.1103/PhysRevA.86.062320}
}

@article{beaver2024selective,
  title={Selective temperature sensing in nanodiamonds using dressed states},
  author={Beaver, Nathaniel M and Stevenson, Paul},
  journal={Advanced Quantum Technologies},
  volume={7},
  number={12},
  pages={2400271},
  year={2024},
  publisher={Wiley Online Library},
doi={10.1002/qute.202400271}
}

@article{xu2012coherence,
  title={Coherence-protected quantum gate by continuous dynamical decoupling in diamond},
  author={Xu, Xiangkun and Wang, Zixiang and Duan, Changkui and Huang, Pu and Wang, Pengfei and Wang, Ya and Xu, Nanyang and Kong, Xi and Shi, Fazhan and Rong, Xing and others},
  journal={Physical review letters},
  volume={109},
  number={7},
  pages={070502},
  year={2012},
  publisher={APS},
doi={10.1103/PhysRevLett.109.070502}
}

@article{golter2014protecting,
  title={Protecting a solid-state spin from decoherence using dressed spin states},
  author={Golter, D Andrew and Baldwin, Thomas K and Wang, Hailin},
  journal={Physical review letters},
  volume={113},
  number={23},
  pages={237601},
  year={2014},
  publisher={APS},
doi={10.1103/PhysRevLett.113.237601}
}

@article{timoney2011quantum,
  title={Quantum gates and memory using microwave-dressed states},
  author={Timoney, N and Baumgart, I and Johanning, M and Var{\'o}n, AF and Plenio, Martin B and Retzker, A and Wunderlich, Ch},
  journal={Nature},
  volume={476},
  number={7359},
  pages={185--188},
  year={2011},
  publisher={Nature Publishing Group UK London},
doi={10.1038/nature10319}
}

@article{paz2014general,
  title={General transfer-function approach to noise filtering in open-loop quantum control},
  author={Paz-Silva, Gerardo A and Viola, Lorenza},
  journal={Physical review letters},
  volume={113},
  number={25},
  pages={250501},
  year={2014},
  publisher={APS},
doi={PhysRevLett.113.250501}
}

@article{soare2014experimental,
  title={Experimental noise filtering by quantum control},
  author={Soare, A and Ball, H and Hayes, D and Sastrawan, J and Jarratt, MC and McLoughlin, JJ and Zhen, X and Green, TJ and Biercuk, MJ},
  journal={Nature Physics},
  volume={10},
  number={11},
  pages={825--829},
  year={2014},
  publisher={Nature Publishing Group UK London},
doi={10.1038/nphys3115}
}

@article{green2013arbitrary,
  title={Arbitrary quantum control of qubits in the presence of universal noise},
  author={Green, Todd J and Sastrawan, Jarrah and Uys, Hermann and Biercuk, Michael J},
  journal={New Journal of Physics},
  volume={15},
  number={9},
  pages={095004},
  year={2013},
  publisher={IOP Publishing},
doi={10.1088/1367-2630/15/9/095004}
}

@article{chen2006geometric,
  title={Geometric continuous dynamical decoupling with bounded controls},
  author={Chen, Pochung},
  journal={Physical Review A—Atomic, Molecular, and Optical Physics},
  volume={73},
  number={2},
  pages={022343},
  year={2006},
  publisher={APS},
doi={10.1103/PhysRevA.73.022343}
}

@article{hanson2008coherent,
  title={Coherent dynamics of a single spin interacting with an adjustable spin bath},
  author={Hanson, R and Dobrovitski, VV and Feiguin, AE and Gywat, O and Awschalom, DD},
  journal={Science},
  volume={320},
  number={5874},
  pages={352--355},
  year={2008},
  publisher={American Association for the Advancement of Science},
doi={10.1126/science.1155400}
}

@article{rondin2014magnetometry,
  title={Magnetometry with nitrogen-vacancy defects in diamond},
  author={Rondin, Lo{\"\i}c and Tetienne, Jean-Philippe and Hingant, Thomas and Roch, Jean-Fran{\c{c}}ois and Maletinsky, Patrick and Jacques, Vincent},
  journal={Reports on progress in physics},
  volume={77},
  number={5},
  pages={056503},
  year={2014},
  publisher={IOP Publishing},
doi={10.1088/0034-4885/77/5/056503}
}

@article{dolde2011electric,
  title={Electric-field sensing using single diamond spins},
  author={Dolde, Florian and Fedder, Helmut and Doherty, Marcus W and N{\"o}bauer, Tobias and Rempp, Florian and Balasubramanian, Gopalakrishnan and Wolf, Thomas and Reinhard, Friedemann and Hollenberg, Lloyd CL and Jelezko, Fedor and others},
  journal={Nature Physics},
  volume={7},
  number={6},
  pages={459--463},
  year={2011},
  publisher={Nature Publishing Group UK London},
doi={10.1038/NPHYS1969}
}

@article{doherty2013nitrogen,
  title={The nitrogen-vacancy colour centre in diamond},
  author={Doherty, Marcus W and Manson, Neil B and Delaney, Paul and Jelezko, Fedor and Wrachtrup, J{\"o}rg and Hollenberg, Lloyd CL},
  journal={Physics Reports},
  volume={528},
  number={1},
  pages={1--45},
  year={2013},
  publisher={Elsevier},
doi={10.1016/j.physrep.2013.02.001}
}

@article{albash2012quantum,
  title={Quantum adiabatic Markovian master equations},
  author={Albash, Tameem and Boixo, Sergio and Lidar, Daniel A and Zanardi, Paolo},
  journal={New Journal of Physics},
  volume={14},
  number={12},
  pages={123016},
  year={2012},
  publisher={IOP Publishing},
doi={10.1088/1367-2630/14/12/123016}
}

@article{chen2024quantum,
  title={Quantum metrology enhanced by leveraging informative noise with error correction},
  author={Chen, Hongzhen and Chen, Yu and Liu, Jing and Miao, Zibo and Yuan, Haidong},
  journal={Physical Review Letters},
  volume={133},
  number={19},
  pages={190801},
  year={2024},
  publisher={APS},
doi={10.1103/PhysRevLett.133.190801}
}

@article{sekatski2016dynamical,
  title={Dynamical decoupling leads to improved scaling in noisy quantum metrology},
  author={Sekatski, Pavel and Skotiniotis, Michalis and D{\"u}r, Wolfgang},
  journal={New Journal of Physics},
  volume={18},
  number={7},
  pages={073034},
  year={2016},
  publisher={IOP Publishing},
doi={10.1088/1367-2630/18/7/073034}
}

@article{macieszczak2015zeno,
  title={Zeno limit in frequency estimation with non-Markovian environments},
  author={Macieszczak, Katarzyna},
  journal={Physical Review A},
  volume={92},
  number={1},
  pages={010102},
  year={2015},
  publisher={APS},
doi={https://doi.org/10.1103/PhysRevA.92.010102}
}

@article{das2025universal,
  title={Universal time scalings of sensitivity in Markovian quantum metrology},
  author={Das, Arpan and G{\'o}recki, Wojciech and Demkowicz-Dobrza{\'n}ski, Rafa{\l}},
  journal={Physical Review A},
  volume={111},
  number={2},
  pages={L020403},
  year={2025},
  publisher={APS},
doi={10.1103/PhysRevA.111.L020403}
}

@article{wan2022bounds,
  title = {Bounds on adaptive quantum metrology under Markovian noise},
  author = {Wan, Kianna and Lasenby, Robert},
  journal = {Phys. Rev. Res.},
  volume = {4},
  issue = {3},
  pages = {033092},
  numpages = {18},
  year = {2022},
  month = {Aug},
  publisher = {American Physical Society},
  doi = {10.1103/PhysRevResearch.4.033092},
  url = {https://link.aps.org/doi/10.1103/PhysRevResearch.4.033092}
}

@article{kurdzialek2022using,
  title = {Using Adaptiveness and Causal Superpositions Against Noise in Quantum Metrology},
  author = {Kurdzia\l{}ek, Stanis\l{}aw and G\'orecki, Wojciech and Albarelli, Francesco and Demkowicz-Dobrza\ifmmode \acute{n}\else \'{n}\fi{}ski, Rafa\l{}},
  journal = {Phys. Rev. Lett.},
  volume = {131},
  issue = {9},
  pages = {090801},
  numpages = {7},
  year = {2023},
  month = {Aug},
  publisher = {American Physical Society},
  doi = {10.1103/PhysRevLett.131.090801},
  url = {https://link.aps.org/doi/10.1103/PhysRevLett.131.090801}
}

@article{zhou2021asymptotic,
  title = {Asymptotic Theory of Quantum Channel Estimation},
  author = {Zhou, Sisi and Jiang, Liang},
  journal = {PRX Quantum},
  volume = {2},
  issue = {1},
  pages = {010343},
  numpages = {25},
  year = {2021},
  month = {Mar},
  publisher = {American Physical Society},
  doi = {10.1103/PRXQuantum.2.010343},
  url = {https://link.aps.org/doi/10.1103/PRXQuantum.2.010343}
}

@incollection{demkowicz2015optical,
title = {Chapter Four - Quantum Limits in Optical Interferometry},
editor = {E. Wolf},
series = {Progress in Optics},
publisher = {Elsevier},
volume = {60},
pages = {345-435},
year = {2015},
issn = {0079-6638},
doi = {https://doi.org/10.1016/bs.po.2015.02.003},
url = {https://www.sciencedirect.com/science/article/pii/S0079663815000049},
author = {Rafal Demkowicz-Dobrzański and Marcin Jarzyna and Jan Kołodyński}
}

@Book{breuer2002theory,
  Title                    = {The theory of open quantum systems},
  Author                   = {Breuer, Heinz-Peter and Petruccione, Francesco and others},
  Publisher                = {Oxford University Press on Demand},
  Year                     = {2002}
}

@Article{caves1981quantum,
  Title                    = {Quantum-mechanical noise in an interferometer},
  Author                   = {Caves, Carlton M},
  Journal                  = {Phys. Rev. D},
  Year                     = {1981},
  Number                   = {8},
  Pages                    = {1693},
  Volume                   = {23},
  Publisher                = {APS},
  url = {https://journals.aps.org/prd/pdf/10.1103/PhysRevD.23.1693}
}

@Article{chin2012quantum,
  Title                    = {Quantum metrology in non-Markovian environments},
  Author                   = {Chin, Alex W and Huelga, Susana F and Plenio, Martin B},
  Journal                  = {Phys. Rev. Lett.},
  Year                     = {2012},
  Number                   = {23},
  Pages                    = {233601},
  Volume                   = {109},
  Publisher                = {APS},
  url = {https://journals.aps.org/prl/pdf/10.1103/PhysRevLett.109.233601}
}

@Article{degen2017quantum,
  Title                    = {Quantum sensing},
  Author                   = {Degen, Christian L and Reinhard, F and Cappellaro, P},
  Journal                  = {Rev. Mod. Phys.},
  Year                     = {2017},
  Number                   = {3},
  Pages                    = {035002},
  Volume                   = {89},
  Publisher                = {APS},
  url = {https://journals.aps.org/rmp/pdf/10.1103/RevModPhys.89.035002}
}

@Article{demkowicz2012elusive,
  Title                    = {The elusive Heisenberg limit in quantum-enhanced metrology},
  Author                   = {Demkowicz-Dobrza{\'n}ski, Rafa{\l} and Ko{\l}ody{\'n}ski, Jan and Gu{\c{t}}{\u{a}}, M{\u{a}}d{\u{a}}lin},
  Journal                  = {Nat. Commun.},
  Year                     = {2012},
  Pages                    = {1063},
  Volume                   = {3},
  Publisher                = {Nature Publishing Group},
  url = {https://www.nature.com/articles/ncomms2067}
}

@Article{demkowicz2014using,
  Title                    = {Using entanglement against noise in quantum metrology},
  Author                   = {Demkowicz-Dobrza{\'n}ski, Rafal and Maccone, Lorenzo},
  Journal                  = {Phys. Rev. Lett.},
  Year                     = {2014},
  Number                   = {25},
  Pages                    = {250801},
  Volume                   = {113},
  Publisher                = {APS},
  url = {https://journals.aps.org/prl/pdf/10.1103/PhysRevLett.113.250801}
}

@Article{demkowicz2017adaptive,
  Title                    = {Adaptive Quantum Metrology under General Markovian Noise},
  Author                   = {Demkowicz-Dobrza\ifmmode \acute{n}\else \'{n}\fi{}ski, Rafa\l{} and Czajkowski, Jan and Sekatski, Pavel},
  Journal                  = {Phys. Rev. X},
  Year                     = {2017},
  Pages                    = {041009},
  Volume                   = {7},
  DOI                      = {10.1103/PhysRevX.7.041009},
  Issue                    = {4},
  Numpages                 = {15},
  Publisher                = {American Physical Society},
  url = {https://journals.aps.org/prx/pdf/10.1103/PhysRevX.7.041009}
}

@Article{escher2011general,
  Title                    = {General framework for estimating the ultimate precision limit in noisy quantum-enhanced metrology},
  Author                   = {Escher, BM and de Matos Filho, RL and Davidovich, L},
  Journal                  = {Nat. Phys.},
  Year                     = {2011},
  Number                   = {5},
  Pages                    = {406--411},
  Volume                   = {7},
  Publisher                = {Nature Research},
  url = {https://www.nature.com/articles/nphys1958}
}

@Article{fujiwara2008fibre,
  Title                    = {A fibre bundle over manifolds of quantum channels and its application to quantum statistics},
  Author                   = {Fujiwara, Akio and Imai, Hiroshi},
  Journal                  = {J. Phys. A: Math. Theor.},
  Year                     = {2008},
  Number                   = {25},
  Pages                    = {255304},
  Volume                   = {41},
  Publisher                = {IOP Publishing},
  url = {http://iopscience.iop.org/article/10.1088/1751-8113/41/25/255304/pdf}
}

@Article{giovannetti2006quantum,
  Title                    = {Quantum metrology},
  Author                   = {Giovannetti, Vittorio and Lloyd, Seth and Maccone, Lorenzo},
  Journal                  = {Phys. Rev. Lett.},
  Year                     = {2006},
  Number                   = {1},
  Pages                    = {010401},
  Volume                   = {96},
  Publisher                = {APS},
  url = {https://journals.aps.org/prl/pdf/10.1103/PhysRevLett.96.010401}
}

@Article{giovannetti2011advances,
  Title                    = {Advances in quantum metrology},
  Author                   = {Giovannetti, Vittorio and Lloyd, Seth and Maccone, Lorenzo},
  Journal                  = {Nat. Photonics},
  Year                     = {2011},
  Number                   = {4},
  Pages                    = {222--229},
  Volume                   = {5},
  Publisher                = {Nature Research},
  url = {https://www.nature.com/articles/nphoton.2011.35}
}

@Article{knysh2014true,
  Title                    = {True limits to precision via unique quantum probe},
  Author                   = {Knysh, Sergey I and Chen, Edward H and Durkin, Gabriel A},
  Journal                  = {arXiv:1402.0495},
  Year                     = {2014},
  url = {https://arxiv.org/pdf/1402.0495}
}

@Article{kolodynski2013efficient,
  Title                    = {Efficient tools for quantum metrology with uncorrelated noise},
  Author                   = {Ko{\l}ody{\'n}ski, Jan and Demkowicz-Dobrza{\'n}ski, Rafa{\l}},
  Journal                  = {New J. Phys.},
  Year                     = {2013},
  Number                   = {7},
  Pages                    = {073043},
  Volume                   = {15},
  Publisher                = {IOP Publishing},
  url = {http://iopscience.iop.org/article/10.1088/1367-2630/15/7/073043/meta}
}

@article{layden2018ancilla,
  title = {Ancilla-Free Quantum Error Correction Codes for Quantum Metrology},
  author = {Layden, David and Zhou, Sisi and Cappellaro, Paola and Jiang, Liang},
  journal = {Phys. Rev. Lett.},
  volume = {122},
  issue = {4},
  pages = {040502},
  numpages = {6},
  year = {2019},
  month = {Jan},
  publisher = {American Physical Society},
  doi = {10.1103/PhysRevLett.122.040502},
  url = {https://link.aps.org/doi/10.1103/PhysRevLett.122.040502}
}

@Article{Paris2009,
  Title                    = {Quantum Estimation for Quantum Technologies},
  Author                   = {Paris, Mattteo G. A.},
  Journal                  = {Int. J. Quantum Inf.},
  Year                     = {2009},
  Pages                    = {125-137},
  Volume                   = {07},
  DOI                      = {10.1142/S0219749909004839},
  url={https://doi.org/10.1142/S0219749909004839}
}

@Article{Pezze2018,
  Title                    = {Quantum metrology with nonclassical states of atomic ensembles},
  Author                   = {Pezz\`e, Luca and Smerzi, Augusto and Oberthaler, Markus K. and Schmied, Roman and Treutlein, Philipp},
  Journal                  = {Rev. Mod. Phys.},
  Year                     = {2018},
  Month                    = {Sep},
  Pages                    = {035005},
  Volume                   = {90},
  DOI                      = {10.1103/RevModPhys.90.035005},
  Issue                    = {3},
  Numpages                 = {70},
  Publisher                = {American Physical Society},
  URL                      = {https://link.aps.org/doi/10.1103/RevModPhys.90.035005}
}

@Article{Pirandola2018,
  Title                    = {Advances in photonic quantum sensing},
  Author                   = {Pirandola, S. and Bardhan, B. R. and Gehring, T. and Weedbrook, C. and Lloyd, S.},
  Journal                  = {Nat. Photonics},
  Year                     = {2018},
  Number                   = {12},
  Pages                    = {724-733},
  Volume                   = {12},
  DOI                      = {10.1038/s41566-018-0301-6},
  ISSN                     = {1749-4893},
  URL                      = {https://doi.org/10.1038/s41566-018-0301-6}
}

@Article{Schnabel2016,
  author  = {R. Schnabel},
  title   = {Squeezed states of light and their applications in laser interferometers},
  journal = {Phys. Rep.},
  year    = {2017},
  volume  = {684},
  pages   = {1 - 51},
  issn    = {0370-1573},
  url     = {http://www.sciencedirect.com/science/article/pii/S0370157317300595},
}

@Article{sekatski2017quantum,
  author    = {Sekatski, Pavel and Skotiniotis, Michalis and Ko{\l}ody{\'n}ski, Janek and D{\"u}r, Wolfgang},
  title     = {Quantum metrology with full and fast quantum control},
  journal   = {Quantum},
  year      = {2017},
  volume    = {1},
  pages     = {27},
  publisher = {Verein zur F{\"o}rderung des Open Access Publizierens in den Quantenwissenschaften},
  url       = {https://quantum-journal.org/papers/q-2017-09-06-27/},
}

@Article{smirne2016ultimate,
  author    = {Smirne, Andrea and Ko{\l}ody{\'n}ski, Jan and Huelga, Susana F and Demkowicz-Dobrza{\'n}ski, Rafa{\l}},
  title     = {Ultimate precision limits for noisy frequency estimation},
  journal   = {Phys. Rev. Lett.},
  year      = {2016},
  volume    = {116},
  number    = {12},
  pages     = {120801},
  publisher = {APS},
  url       = {https://journals.aps.org/prl/pdf/10.1103/PhysRevLett.116.120801},
}

@Article{tsang2016quantum,
  author    = {Tsang, Mankei and Nair, Ranjith and Lu, Xiao-Ming},
  title     = {Quantum theory of superresolution for two incoherent optical point sources},
  journal   = {Phys. Rev. X},
  year      = {2016},
  volume    = {6},
  number    = {3},
  pages     = {031033},
  publisher = {APS},
  url       = {https://journals.aps.org/prx/pdf/10.1103/PhysRevX.6.031033},
}

@Article{unden2016quantum,
  author    = {Unden, Thomas and Balasubramanian, Priya and Louzon, Daniel and Vinkler, Yuval and Plenio, Martin B and Markham, Matthew and Twitchen, Daniel and Stacey, Alastair and Lovchinsky, Igor and Sushkov, Alexander O and others},
  title     = {Quantum metrology enhanced by repetitive quantum error correction},
  journal   = {Phys. Rev. Lett.},
  year      = {2016},
  volume    = {116},
  number    = {23},
  pages     = {230502},
  publisher = {APS},
  url       = {https://journals.aps.org/prl/pdf/10.1103/PhysRevLett.116.230502},
}

@Article{viola1999dynamical,
  author    = {Viola, Lorenza and Knill, Emanuel and Lloyd, Seth},
  title     = {Dynamical decoupling of open quantum systems},
  journal   = {Phys. Rev. Lett.},
  year      = {1999},
  volume    = {82},
  number    = {12},
  pages     = {2417},
  publisher = {APS},
  url       = {https://journals.aps.org/prl/pdf/10.1103/PhysRevLett.82.2417},
}

@Article{zhou2018achieving,
  author    = {Zhou, Sisi and Zhang, Mengzhen and Preskill, John and Jiang, Liang},
  title     = {Achieving the Heisenberg limit in quantum metrology using quantum error correction},
  journal   = {Nat. Commun.},
  year      = {2018},
  volume    = {9},
  number    = {1},
  pages     = {78},
  publisher = {Nature Publishing Group},
  url       = {https://www.nature.com/articles/s41467-017-02510-3},
}

\clearpage

\section*{end matter}

\subsection{Proof of Lemma~\ref{lem:core} and Theorem~\ref{thm:1}}
\label{app:proof}

This section is based on the methods applied in \cite{zhou2018achieving} for a similar problem with a different operator space.

\textbf{(1) Necessity.}
If $G=g_1\openone+\sum_\alpha g_\alpha A_\alpha$, then for any $\ket{\psi_0},\ket{\psi_1}$ satisfying $\forall_\alpha \braket{\psi_0|A_\alpha|\psi_0}=\braket{\psi_1|A_\alpha|\psi_1}$ automatically $\braket{\psi_1|G|\psi_1}-\braket{\psi_0|G|\psi_0}=0$, so any decoherence-free subspace is automatically insensitive to parameter changes.

\textbf{(2) Sufficiency.}
Let us define the inner product of two Hermitian matrices $A$, $B$ acting on $\mathcal H_S$ as the trace of their product, $\tr(AB)$. Then any $G$ satisfying
\begin{equation}
   G\notin \t{span}_{\mathbb R}\{\openone,A_\alpha\} 
\end{equation}
has a unique decomposition $G=G_\parallel+G_\perp$, where $G_\parallel\in\t{span}_{\mathbb R}\{\openone,A_\alpha\}$ and $\forall_\alpha\tr(G_\perp A_\alpha)=\tr(G_\perp)=0$. Next, $G_\perp$ may be always written as $G_\perp=\frac{1}{2}(\tr|G_\perp|)(\rho_1-\rho_0)$ where both $\rho_0,\rho_1$ are mutually orthogonal matrices with $\tr(\rho_{0/1})=1$ and $|G_\perp|:=\sqrt{G_\perp^2}$. Let $\ket{\psi_0},\ket{\psi_1}\in \mathcal H_S\otimes\mathcal H_A$ be the purifications of $\rho_0,\rho_1$ (i.e. $\forall_{i=0,1}\tr_A(\ket{\psi_i}\bra{\psi_i})=\rho_i$) with orthogonal support on the ancilla $\mathcal H_A$. From $\forall_\alpha\tr(G_\perp A_\alpha)=0$, such constructed states satisfy
\begin{equation}
    \forall_\alpha\braket{\psi_0|(A_\alpha\otimes \openone)|\psi_0}=\braket{\psi_1|(A_\alpha\otimes \openone)|\psi_1}\,.
\end{equation}
Since they have orthogonal support on the ancilla, they satisfy 
\begin{equation}
    \braket{\psi_0|(A_\alpha\otimes \openone)|\psi_1}=0.
\end{equation}
Finally, the sensitivity for the signal generated by $G$ is 
\begin{equation}
\begin{split}
    &\braket{\psi_1|(G\otimes \openone)|\psi_1}-\braket{\psi_0|(G\otimes \openone)|\psi_0}\\
    =&\braket{\psi_1|(G_\perp\otimes \openone)|\psi_1}-\braket{\psi_0|(G_\perp\otimes \openone)|\psi_0}\\
    =&\tr(G_\perp (\rho_1-\rho_0))=
    \tr\left(G_\perp \tfrac{G_\perp}{\frac{1}{2}\tr|G_\perp|}\right)\\
    =&\frac{2\tr(G_\perp^2)}{\tr |G_\perp|}.
    \end{split}
\end{equation}

\textbf{(3) Optimization as SDP.}
Let $\ket{\tilde \psi_0},\ket{\tilde \psi_1}\in\mathcal H_S\otimes \mathcal H_A$ be a pair of orthogonal states. We define:
\begin{equation}
    \tilde \rho_0=\tr_A(\ket{\tilde \psi_0}\bra{\tilde \psi_0}),\quad \tilde \rho_1=\tr_A(\ket{\tilde \psi_1}\bra{\tilde \psi_1}),
\end{equation}
and $\tilde G=\tilde\rho_1-\tilde\rho_0$. Since both states are normalized, $\tr|\tilde G|\leq2$ and $\tr(\tilde G)=0$.
Condition $\braket{\tilde\psi_0|(A_\alpha\otimes \openone)|\tilde\psi_0}=\braket{\tilde\psi_1|(A_\alpha\otimes \openone)|\tilde\psi_1}$ is equivalent to $\tr(A_\alpha \tilde G)=0$.

Note also that the support of $\ket{\tilde\psi_{0/1}}$ on $\mathcal H_A$ has no impact on sensitivity $\braket{\tilde\psi_1|(G\otimes \openone)|\tilde\psi_1}-\braket{\tilde\psi_0|(G\otimes \openone)|\tilde\psi_0}$. Moreover, if these supports are mutually orthogonal, the condition $\braket{\psi_0|A_\alpha\otimes \openone|\psi_1}=0$ is satisfied automatically.
Therefore, looking for an optimal choice of $\ket{\tilde\psi_0},\ket{\tilde\psi_1}$, without loss of generality we may restrict to the ones with orthogonal support on $\mathcal H_A$.

Therefore, the whole problem may be formulated as optimization over a hermitian matrix $\tilde G$ satisfying:

\begin{equation}
    \begin{split}
        &\max_{\tilde G}\tr(G \tilde G)\\
        &\tr(\tilde G)=0,\quad\forall_\alpha\tr(A_\alpha \tilde G)=0,\quad \tr|\tilde G|\leq 2.
    \end{split}
\end{equation}
The constraint $\tr|\tilde G|\leq 2$ may be written as linear constraints by introducing as an additional variable a hermitian matrix $X$:
\begin{equation}
    \tr|\tilde G|\leq 2 \Leftrightarrow \exists_{X\succeq 0}:-X\preceq\tilde G\preceq X,\tr(X)\leq 2,
\end{equation}
so finally we obtain SDP problem:
\begin{equation}
    \begin{split}
        &\max_{\tilde G,X}\tr(G \tilde G)\\
        &\tr(\tilde G)=0,\quad\forall_\alpha\tr(A_\alpha \tilde G)=0,\\
        &X+\tilde G\succeq0,\quad X-\tilde G\succeq0,\quad \tr(X)\leq2.
    \end{split}
\end{equation}
Alternatively, the problem may by formulated as a minimization of the Lagrange dual problem (also solvable by SDP)~\cite{zhou2018achieving}.

\subsection{Lamb shift Hamiltonian}
\label{app:lamb}
The Lamb shift Hamiltonian is given by
\begin{equation}
\label{eq:lamb}
  H_{LS}=\sum_\nu\sum_{\alpha,\beta}S_{\alpha\beta}(\nu){L_\alpha^\nu}^\dagger L_\beta^\nu,
\end{equation}
where $S_{\alpha\beta}(\nu)$ is the dispersive counterpart of
$\gamma_{\alpha\beta}(\nu)$, the two being related by a Kramers--Kronig
transform~\cite{breuer2002theory}. By construction, $H_{LS}$ always commutes
with $H_D$. Therefore, if the eigenstates $\ket{\psi_0},\ket{\psi_1}$ have
distinct eigenvalues $\lambda_0\neq\lambda_1$, $H_{LS}$ commutes also with
$G_{\t{eff}}$, and its only effect is an additional phase shift, which has to
be accounted for when constructing the estimator.

Note that once condition~\eqref{eq:relaxation} is imposed, the off-diagonal
elements $\braket{\psi_0|A_\alpha|\psi_1}$ are excluded explicitly rather than removed by the secular approximation, so decoupling from the noise no longer requires
a large splitting between $\lambda_0$ and $\lambda_1$. In that case the
effective generator is defined as $\Pi_{\mathcal C}G\Pi_{\mathcal C}$. Note,
however, that if $\ket{\psi_0},\ket{\psi_1}$ span a degenerate eigenspace of
$H_D$, then $H_{LS}$ may act nontrivially inside it, so that
$[H_{LS},G_{\t{eff}}]\neq 0$; this disturbs the estimation procedure by
introducing an additional rotation in a different
direction~\cite{de2013quantum}. It is therefore beneficial to keep
$|\lambda_1-\lambda_0|$ large compared with the Lamb shift.

Note also that the standard derivation of the master equation relies on
perturbation theory truncated at second order. 
Fourth-order corrections,  at least for noise arising from a classical fluctuating field~\cite{groszkowski2023simple},
result in fluctuations of the Lamb shift,
which leads to dephasing between $\ket{\psi_0}$ and $\ket{\psi_1}$ even when
$\Pi_{\mathcal C}A_\alpha\Pi_{\mathcal C}\propto\Pi_{\mathcal C}$ is satisfied.
Our discussion is therefore valid only in the regime where fourth-order
corrections are negligible.

\subsection{Proof of \autoref{thm:2}}
\label{app:proof2}

Condition $G\notin \vspan\{\openone,L_k,L_k^\dagger,L_k^\dagger L_j\}$ is equivalent to the existence of a two-dimensional code space $\mathcal C\subset \mathcal H_S\otimes \mathcal H_A$ satisfying the Knill-Laflamme conditions for the Lindblad operators \cite{zhou2018achieving}:
\begin{equation}
\label{eq:proofL}
\begin{split}
    \forall_{\alpha,\nu}\Pi_{\mathcal C}L_\alpha(\nu)\Pi_{\mathcal C}\propto\Pi_\mathcal C,\\
    \forall_{\alpha,\nu,\beta,\nu'}\Pi_{\mathcal C}L^\dagger_\alpha(\nu)L_\beta(\nu')\Pi_{\mathcal C}\propto\Pi_\mathcal C,
\end{split}
\end{equation}
with $\Pi_{\mathcal C}G\Pi_{\mathcal C}\not\propto\Pi_\mathcal C$. Note that the above conditions imply that the Lamb shift~\eqref{eq:lamb} acts
as a global phase on $\mathcal{C}$, $\Pi_{\mathcal{C}}H_{LS}\Pi_{\mathcal{C}}
\propto\Pi_{\mathcal{C}}$, so it need not be discussed further.

Similarly, the condition $G\notin \mathrm{span}_{\mathbb C}\{\openone,A_\alpha,A_\alpha A_\beta\}$ is equivalent to the existence of a code space $\mathcal C\subset \mathcal H_S\otimes \mathcal H_A$ such that
\begin{equation}
\label{eq:proofA}
\begin{split}
    \forall_{\alpha}\Pi_{\mathcal C}A_\alpha\Pi_{\mathcal C}\propto\Pi_\mathcal C,\\
    \forall_{\alpha,\beta}\Pi_{\mathcal C}A_\alpha A_\beta\Pi_{\mathcal C}\propto\Pi_\mathcal C,
\end{split}
\end{equation}
with $\Pi_{\mathcal C}G\Pi_{\mathcal C}\not\propto\Pi_\mathcal C$.

We show that, for a fixed $\mathcal C$, \eqref{eq:proofL} implies \eqref{eq:proofA}, and that \eqref{eq:proofA} implies the existence of a control Hamiltonian $H_C$ whose eigenspace is $\mathcal C$ and for which \eqref{eq:proofL} holds.

\textbf{(1) \eqref{eq:proofL} $\Rightarrow$ \eqref{eq:proofA}}

Recall that the Lindblad operator labeled by $(\alpha,\nu)$ is related to $A_\alpha$ via
\begin{equation}
 L_\alpha(\nu)\equiv\sum_{\epsilon'-\epsilon=\nu}\Pi_\epsilon A_\alpha \Pi_{\epsilon'}.
\end{equation}
Therefore, $A_\alpha$ decomposes as
\begin{equation}
A_\alpha=\sum_\nu L_\alpha(\nu).
\end{equation}
Summing \eqref{eq:proofL} over $\nu$ and $\nu'$ immediately gives \eqref{eq:proofA}.

\textbf{(2) \eqref{eq:proofA} $\Rightarrow$ $\exists\,H_C$ such that \eqref{eq:proofL} holds}

We choose $H_C$ such that the system Hamiltonian has only two eigenspaces, namely $\mathcal C$ and $\mathcal C^\perp$, separated by an energy gap $\nu_0$. The Lindblad operators are then
\begin{equation}
    \begin{split}
        L_\alpha(0)&=\Pi_\mathcal C A_\alpha\Pi_\mathcal C+\Pi_{\mathcal C^\perp} A_\alpha\Pi_{\mathcal C^\perp},\\
        L_\alpha(-\nu_0)&=\Pi_{\mathcal C^\perp} A_\alpha\Pi_{\mathcal C},\\
        L_\alpha(+\nu_0)&=\Pi_{\mathcal C} A_\alpha\Pi_{\mathcal C^\perp},
    \end{split}
\end{equation}
for all $\alpha$. By construction, $\Pi_{\mathcal C}L_\alpha(\pm\nu_0)\Pi_{\mathcal C}=0$, while $\Pi_{\mathcal C}A_\alpha\Pi_{\mathcal C}\propto\Pi_{\mathcal C}$ implies
$\Pi_{\mathcal C}L_\alpha(0)\Pi_{\mathcal C}\propto\Pi_{\mathcal C}$.
Next, after projection onto $\mathcal C$, the only potentially nonzero quadratic terms are
\[
\Pi_{\mathcal C}L^\dagger_\alpha(0)L_\beta(0)\Pi_{\mathcal C}
\qquad\text{and}\qquad
\Pi_{\mathcal C}L^\dagger_\alpha(-\nu_0)L_\beta(-\nu_0)\Pi_{\mathcal C}.
\]
For the first term, from $\Pi_{\mathcal C}A_\alpha\Pi_{\mathcal C}\propto\Pi_{\mathcal C}$ one has
\begin{equation}
\Pi_{\mathcal C}L^\dagger_\alpha(0)L_\beta(0)\Pi_{\mathcal C}
=
\Pi_{\mathcal C}A_\alpha\Pi_{\mathcal C}A_\beta\Pi_{\mathcal C}
\propto \Pi_{\mathcal C}.
\end{equation}
To analyze the second term, we expand the condition
$\Pi_{\mathcal C}A_\alpha A_\beta\Pi_{\mathcal C}\propto \Pi_{\mathcal C}$ as
\begin{equation}
\Pi_{\mathcal C}(A_\alpha\Pi_\mathcal C+A_\alpha\Pi_{\mathcal C^\perp})(\Pi_\mathcal C A_\beta+\Pi_{\mathcal C^\perp}A_\beta)\Pi_\mathcal C\propto \Pi_{\mathcal C}.
\end{equation}
Using $\Pi_\mathcal C\Pi_{\mathcal C^\perp}=0$ and $\Pi_{\mathcal C}A_\alpha\Pi_{\mathcal C}\propto\Pi_{\mathcal C}$, this implies
\begin{equation}
\Pi_\mathcal C A_\alpha \Pi_{\mathcal C^\perp}A_\beta\Pi_\mathcal C\propto \Pi_\mathcal C.
\end{equation}
which means $\Pi_\mathcal C L^\dagger_\alpha(-\nu_0)L_\beta(-\nu_0)\Pi_\mathcal C\propto\Pi_{\mathcal C}$.
This proves \eqref{eq:proofL} for the chosen $H_C$.

\subsection{No two-dimensional trivial subspace for spin-1}
\label{app:example}

Let $\{S_i\}_{i\in\{x,y,z\}}$ be the irreducible spin-$1$ representation on a three-dimensional Hilbert space $\mathcal H_S$,  given in the standard
basis by
\begin{equation}
\begin{aligned}
   S_x&=\frac{1}{\sqrt{2}}\begin{bmatrix}0&1&0\\1&0&1\\0&1&0\end{bmatrix}, &
   S_y&=\frac{1}{\sqrt{2}}\begin{bmatrix}0&-i&0\\i&0&-i\\0&i&0\end{bmatrix}, \\[4pt]
   S_z&=\begin{bmatrix}1&0&0\\0&0&0\\0&0&-1\end{bmatrix}. &&
\end{aligned}
\end{equation}
Then there exists no two-dimensional subspace $\mathcal C\subset \mathcal H_S$ satisfying
\begin{equation}
\label{eq:prep}
    \forall_{i\in\{x,y,z\}}\ \Pi_\mathcal C S_i \Pi_\mathcal C=\lambda_i \Pi_\mathcal C.
\end{equation}

\begin{proof}

We proceed by contradiction. We use the fact that each $S_i$ has eigenvalues $-1,0,+1$, as well as the commutation relations.

First, we show that for any direction $i\in\{x,y,z\}$ the equality $\Pi_\mathcal C S_i \Pi_\mathcal C=\lambda_i \Pi_\mathcal C$ implies $\lambda_i=0$. We start with the $z$ direction. Since any two two-dimensional subspaces of the three-dimensional space $\mathcal H_S$ have a non-empty intersection, we may choose a vector $\ket{u}\in\mathcal C\cap\vspan\{\ket{0},\ket{+1}\}$. Since $\ket{u}\in\mathcal C$, we have $\braket{u|S_z|u}=\lambda_z$, and since $\ket{u}\in\vspan\{\ket{0},\ket{+1}\}$, it follows that $\braket{u|S_z|u}\geq 0$, which together implies $\lambda_z\geq 0$. Similarly, there exists another vector $\ket{v}\in\mathcal C\cap\vspan\{\ket{0},\ket{-1}\}$, which implies $\lambda_z\leq 0$. Therefore $\lambda_z=0$. The same reasoning applies for $x$ and $y$.

Since $S_i$ is traceless, $\Pi_{\mathcal C} S_i \Pi_{\mathcal C}=0$ implies also $\Pi_{\mathcal C^\perp} S_i \Pi_{\mathcal C^\perp}=0$ (where $\Pi_{\mathcal C^\perp}$ denotes the projection onto the subspace orthogonal to $\mathcal C$). Therefore, \eqref{eq:prep} implies
\begin{equation}
\label{eq:xyzp}
\forall_{i\in\{x,y,z\}}\quad
S_i=\Pi_\mathcal C S_i \Pi_{\mathcal C^\perp}+\Pi_{\mathcal C^\perp} S_i \Pi_{\mathcal C}.
\end{equation}
Since $\Pi_\mathcal C\Pi_{\mathcal C^\perp}=0$, this leads to
\begin{equation}
\label{eq:comm}
[S_x,S_y]=\Pi_\mathcal C [S_x,S_y]\Pi_\mathcal C +\Pi_{\mathcal C^\perp}[S_x,S_y]\Pi_{\mathcal C^\perp}.
\end{equation}
Since $S_{z}=-i[S_{x},S_{y}]$, it must simultaneously have the forms \eqref{eq:xyzp} and \eqref{eq:comm}, which implies $S_{z}\equiv 0$. This contradicts the fact that $S_{z}$ has eigenvalues $-1,0,+1$. Therefore, there exists no two-diemnsional subspace $\mathcal C\subset \mathcal H_S$ satisfying \eqref{eq:prep}.

\end{proof}

\setcounter{section}{0}
\setcounter{equation}{0}
\renewcommand{\theequation}{S\arabic{equation}}

\clearpage
\onecolumngrid

\begin{center}
  {\large\bfseries Supplemental Material to:\\Protecting Heisenberg scaling in quantum metrology via engineered dressed states}\\[4mm]
  Wojciech G\'orecki and Christiane P. Koch\\[3mm]
  {\itshape\footnotesize
  Freie Universit\"at Berlin, Fachbereich Physik and Dahlem Center for
  Complex Quantum Systems, Arnimallee 14, 14195 Berlin, Germany}
\end{center}
\vspace{4mm}

\section{Error bound for the projected dynamics}
\label{app:zeno}

The bound \eqref{eq:timesscales} on the difference between the exact and the
projected evolution is a special case of a more general theorem derived in
Ref.~\cite{Burgarth2022oneboundtorulethem}, which treats the situation of
both the strong drive and the perturbation being time-dependent. For clarity, we
quote it here in the original notation of Ref.~\cite{Burgarth2022oneboundtorulethem} and indicate which terms drop out in our setting. For
$U_\kappa(t)=\mathcal T\exp\big(-i\int_0^t\!ds\,[\kappa H_0(s)+H_1(s)]\big)$
with $H_0(s)=\sum_{\ell=1}^{m}E_\ell(s)P_\ell(s)$, where $E_\ell(s)$ and
$P_\ell(s)$ are the instantaneous eigenvalues and eigenprojectors of the driving
Hamiltonian, Ref.~\cite[Eq.~(4.43)]{Burgarth2022oneboundtorulethem} proves, for
$t\in[0,T]$,
\begin{align}
\label{eq:burgarth}
\Big\|U_\kappa(t)&-\mathcal T e^{-i\int_0^t ds\,[\kappa H_0(s)+H_Z(s)+A(s)]}\Big\|
\le\frac{\sqrt m}{\kappa\eta}\big(1+T\|A\|_{\infty,T}+2T\|H_1\|_{\infty,T}\big)
\nonumber\\
&\times\Big[\Big(2+\tfrac{\eta'}{\eta}T\Big)
\big(\|A\|_{\infty,T}+\|H_1\|_{\infty,T}\big)
+T\big(\|\dot A\|_{\infty,T}+\|\dot H_1\|_{\infty,T}
+2\|A\|_{\infty,T}\|H_1\|_{\infty,T}\big)\Big],
\end{align}
with $\|X\|_{\infty,T}=\sup_{s\in[0,T]}\|X(s)\|$. Here
$H_Z(s)=\sum_\ell P_\ell(s)H_1(s)P_\ell(s)$ is the Zeno Hamiltonian, $m$ is the
number of distinct eigenvalues of $H_0$, and $\eta$ is its minimal spectral gap.
The operator $A$ and the quantity $\eta'$ are defined in
Ref.~\cite[Eqs.~(4.14) and (4.15)]{Burgarth2022oneboundtorulethem}; both are
built from the time derivatives of the spectral projections and therefore
vanish whenever $H_0$ is constant.

This is the case here: $\kappa H_0=H_D$, so
that its spectral projections are constant, $P_\ell=\Pi_\epsilon$, and hence
$A=0$ and $\eta'=0$. In our notation $\kappa\eta=\nu_{\min}$. With
$H_1(t)=V(t)=\delta\omega\,G+H_{\t{cl}}(t)$, the Zeno Hamiltonian reproduces
exactly \eqref{eq:hamproj}, $H_D+H_Z(t)=H^{\t{eff}}(t)$, and
\eqref{eq:burgarth} reduces to the bound quoted in the main text, which we
repeat here for convenience:
\begin{equation}
\label{eq:timesscales2}
\Big\|\mathcal{T}e^{-i\int_0^{T}\!dt\,H(t)}
-\mathcal{T}e^{-i\int_0^{T}\!dt\,H_{\t{eff}}(t)}\Big\|
\le
\frac{\sqrt m}{\nu_{\min}}\big(1+2T\|V\|_{\infty,T}\big)
\big(2\|V\|_{\infty,T}+T\|\dot V\|_{\infty,T}\big).
\end{equation}
Expanding the product gives four contributions,
\begin{equation}
\label{eq:fourterms}
\frac{\sqrt m}{\nu_{\min}}
\Big[\,2\|V\|_{\infty,T}+T\|\dot V\|_{\infty,T}
+4T\|V\|_{\infty,T}^{2}+2T^{2}\|V\|_{\infty,T}\|\dot V\|_{\infty,T}\Big].
\end{equation}
The first is time-independent and yields the validity condition
$\sqrt m\,\|V\|_{\infty,T}\ll\nu_{\min}$; the remaining three restrict the
interrogation time to
\begin{equation}
\label{eq:Tconstraints}
T\ll\min\left(\frac{\nu_{\min}}{\sqrt m\,\|\dot V\|_{\infty,T}},\;
\frac{\nu_{\min}}{\sqrt m\,\|V\|_{\infty,T}^{2}},\;
\sqrt{\frac{\nu_{\min}}{\sqrt m\,\|V\|_{\infty,T}\|\dot V\|_{\infty,T}}}\right).
\end{equation}

Consider now $N$ identical subsystems. The perturbation is then a sum of
operators acting on the individual subsystems, so both $\|V\|_{\infty,T}$ and
$\|\dot V\|_{\infty,T}$ grow linearly with $N$. At the same time, the number $m$
of distinct eigenvalues of the total dressing Hamiltonian grows with $N$ as
well: linearly if the eigenvalues of the single-subsystem $H_D$ are
commensurate, and polynomially for a generic spectrum. Requiring that
\eqref{eq:Tconstraints} still holds at a fixed total interrogation time, the
second and third constraints both give $\nu_{\min}\gtrsim\sqrt m\,N^{2}$, so
that $\nu_{\min}$ must grow at least as $N^{5/2}$, and faster for a generic
spectrum. The bound \eqref{eq:timesscales2} is presumably not tight in this
regime, and the actual scaling of $\nu_{\min}$ needed to keep the decoupling
effective would have to be established for a concrete model; some growth with
$N$ is nevertheless to be expected.

\section{White noise as an infinite-temperature bath: an explicit application of Theorem~2}
\label{app:perpendicular}

This section serves two purposes. First, it shows that purely white noise has to
be treated as an infinite-temperature bath, so \autoref{thm:2}, rather than \autoref{thm:1},
is the relevant one. Second, it makes the content of \autoref{thm:2}
concrete by constructing the required control Hamiltonian explicitly, and
compares the resulting protocol with two simpler ones.

We consider a qubit with
$H^{\t{free}}_S=\omega\sigma_z=(\omega_0+\delta\omega)\sigma_z$, so that
$G=\sigma_z$, coupled to the environment through a perpendicular field,
\begin{equation}
    H_I=\sigma_x B_x,\qquad \braket{B_x(t)B_x(t')}=\gamma\,\delta(t-t').
\end{equation}

\subsection{Two master equations}

Since the noise is $\delta$-correlated, its correlation time vanishes and the
averaged evolution is given, on any time scale, by the singular-coupling master
equation~\cite{breuer2002theory} with the single Lindblad operator $L=\sigma_x$,
\begin{equation}
\label{eq:masterperpexact}
    \frac{d\rho}{dt}=-i[H_S,\rho]+\gamma\left(\sigma_x\rho\sigma_x-\rho\right).
\end{equation}
This is the canonical example in which HNLS permits HS in the presence of
non-trivial noise~\cite{zhou2018achieving}: since $\sigma_x^2=\openone$, the
Lindblad span reduces to $\vspan_{\mathbb C}\{\openone,\sigma_x\}$, and
$G=\sigma_z\notin\vspan_{\mathbb C}\{\openone,\sigma_x\}$.

If the coupling is weak compared to the qubit frequency, $\gamma\ll\omega$,
then for coarse-graining times $\tau_{\t{cg}}\gg 1/\omega$ one may instead derive
the master equation in the weak-coupling limit~\cite{breuer2002theory}. Projecting $\sigma_x$
onto the eigenspaces of $H_S$ leaves two jump operators,
\begin{equation}
\label{eq:Lbare}
    L_{+}=\ket{0}\bra{0}\sigma_x\ket{1}\bra{1}=\ket{0}\bra{1},\qquad
    L_{-}=\ket{1}\bra{1}\sigma_x\ket{0}\bra{0}=\ket{1}\bra{0},
\end{equation}
where $\ket{0},\ket{1}$ are the eigenstates of $\sigma_z$, so that $\ket{0}$ is
the higher-energy state and $L_{+}$ describes excitation. Since the noise
is white, the bath spectral density takes the same value at every frequency, in
particular $\gamma(2\omega)=\gamma(-2\omega)=\gamma$, which is precisely the
infinite-temperature case. The master equation then reads
\begin{equation}
\label{eq:masterperpapp}
    \frac{d\rho'}{dt}=-i[H_S,\rho']
    +\gamma\sum_{i\in\{+,-\}}\Big(L_i\rho'{L_i}^\dagger
    -\tfrac{1}{2}\{{L_i}^\dagger L_i,\rho'\}\Big).
\end{equation}
Here HNLS fails, since
$\sigma_z=L_-^\dagger L_--L_+^\dagger L_+$ lies in the Lindblad span, so HS cannot be recovered by QEC applied to \eqref{eq:masterperpapp}.

The difference admits the following interpretation. The noise generated by
$\sigma_x$ can be viewed as a bit flip acting at a random instant. By itself, a
bit flip preserves the relative phase. However, since the system rotates
rapidly about $\sigma_z$, the two terms acquire different phases depending on
the exact time at which the flip occurs. The phase information is thereby lost,
and the action of such a random bit flip becomes indistinguishable from
independent excitation and relaxation of equal strength.

This picture is confirmed by the exact solution of the free evolution, i.e., in the absence of QEC.
Solving
\eqref{eq:masterperpexact} analytically gives
\begin{equation}
\label{eq:exact}
\rho(t)=\frac{1}{2}
\begin{pmatrix}
1+z_0e^{-2\gamma t} & e^{-\gamma t}\big(X-iY\big)\\[4pt]
e^{-\gamma t}\big(X+iY\big) & 1-z_0e^{-2\gamma t}
\end{pmatrix},
\end{equation}
with
\begin{align}
X&=\Big(\cos\Omega t+\frac{\gamma}{\Omega}\sin\Omega t\Big)x_0
   -\frac{2\omega}{\Omega}\sin(\Omega t)\,y_0,\\
Y&=\frac{2\omega}{\Omega}\sin(\Omega t)\,x_0
   +\Big(\cos\Omega t-\frac{\gamma}{\Omega}\sin\Omega t\Big)y_0,
\end{align}
and $\Omega=\sqrt{4\omega^2-\gamma^2}$, whereas \eqref{eq:masterperpapp} yields
\begin{equation}
\label{eq:simply}
\rho'(t)=\frac{1}{2}
\begin{pmatrix}
1+z_0e^{-2\gamma t} & (x_0-iy_0)\,e^{-\gamma t}e^{-2i\omega t}\\[4pt]
(x_0+iy_0)\,e^{-\gamma t}e^{+2i\omega t} & 1-z_0e^{-2\gamma t}
\end{pmatrix}.
\end{equation}
The solution \eqref{eq:simply} reproduces \eqref{eq:exact} up to relative
corrections of order $\mathcal O(\gamma/\omega)$, so on timescales
$t\gg 1/\omega$ the two agree.

From the point of view of the underlying microscopic model, HS can therefore be achieved without any dressing, provided QEC is applied on timescales shorter than $1/\omega$, i.e. before \eqref{eq:masterperpexact} averages to \eqref{eq:masterperpapp}.

\subsection{Recovering Heisenberg scaling in the weak-coupling limit: Theorem~2}

We now show that a suitable control Hamiltonian restores HS for this model also
in the weak-coupling limit.
As the bath is effectively at infinite temperature, \autoref{thm:2} is the relevant criterion. Its condition is
satisfied: with the single coupling operator $A=\sigma_x$ one has
$A^2=\openone$, so that
$\vspan_{\mathbb C}\{\openone,A,A^2\}=\vspan_{\mathbb C}\{\openone,\sigma_x\}$
and $G=\sigma_z$ lies outside it. A control Hamiltonian generating a correctable
subspace must therefore exist. Following the construction in End
Matter~\ref{app:proof2}, we append a qubit ancilla and take
\begin{equation}
\label{eq:Hcperp}
    H_C=-\omega_0\,\sigma_z\otimes\openone_A-\lambda\,\Pi_{\mathcal C},
    \qquad
    \Pi_{\mathcal C}=\ket{00}\bra{00}+\ket{11}\bra{11},
\end{equation}
with $\lambda>0$, so that $H_D=-\lambda\Pi_{\mathcal C}$ has exactly two
eigenspaces: the code space $\mathcal C=\vspan\{\ket{00},\ket{11}\}$ at energy
$-\lambda$, and $\mathcal C^\perp=\vspan\{\ket{01},\ket{10}\}$ at energy $0$.
Since $\sigma_x\otimes\openone_A$ maps each of these subspaces onto the other,
the dephasing component vanishes, $L_{0}=0$, and the only jump operators are
\begin{equation}
\label{eq:Lperp}
\begin{split}
    L_{-}&=\Pi_{\mathcal C}\,(\sigma_x\otimes\openone_A)\,\Pi_{\mathcal C^\perp}
    =\ket{00}\bra{10}+\ket{11}\bra{01},\\
    L_{+}&=\Pi_{\mathcal C^\perp}\,(\sigma_x\otimes\openone_A)\,\Pi_{\mathcal C}
    =\ket{10}\bra{00}+\ket{01}\bra{11},
\end{split}
\end{equation}
both occurring at the same rate $\gamma$. The subspace $\mathcal C$ is not
decoherence-free, but correctable. The signal survives, since
\begin{equation}
    \Pi_{\mathcal C}\,(G\otimes\openone_A)\,\Pi_{\mathcal C}
    =\ket{00}\bra{00}-\ket{11}\bra{11}\not\propto\Pi_{\mathcal C},
\end{equation}
and HS is therefore recovered by QEC applied to the averaged evolution.

The construction above serves 
to illustrate the usefulness of \autoref{thm:2}: it
provides a control Hamiltonian for which QEC restores HS \emph{after} the
evolution has been averaged to a weak-coupling master equation, that is, one
whose Lindblad operators act between eigenstates of the system Hamiltonian. While two
simpler protocols reach the same result, a comparison is instructive:
Without any dressing, HS remains achievable provided QEC is applied at intervals
much shorter than $1/\omega$, so that the dynamics is described by
\eqref{eq:masterperpexact} rather than \eqref{eq:masterperpapp} and the standard
procedure of Ref.~\cite{zhou2018achieving} applies. Alternatively, the control much simpler than \eqref{eq:Hcperp}, namely $H_C=-\omega_0\,\sigma_z$, leaves $H_D=0$;
all transition frequencies then collapse to a scale set by the unknown
$\delta\omega$, the secular decomposition no longer separates
$\ket{0}\bra{1}$ from $\ket{1}\bra{0}$, and \eqref{eq:masterperpexact} remains
the appropriate description over a correspondingly longer window, so that QEC
may be applied at a considerably lower rate.
In all three cases, the mechanism enabling HS is the same: preventing the
coupling operators from splitting into Bohr-frequency components, which is what
generates the jump operators that destroy the code. This can be achieved either
by more frequent QEC~\cite{zhou2018achieving} or by proper dressing, as we
suggest here. The control \eqref{eq:Hcperp} makes the code space an isolated
eigenspace of $H_D$, so that the resulting components act within it, while
$H_C=-\omega_0\sigma_z$ removes the level splitting altogether, leaving a single
Bohr frequency and hence no decomposition at all.

\end{document}